\DeclareMathSymbol{\vDash}        {\mathrel}{AMSa}{"0F}
\DeclareMathSymbol{\leqslant}     {\mathrel}{AMSa}{"36}
\newcommand\anglebra[1]{\left< \,#1 \, \right>}
\newcommand\lang[1]{\mathcal{L}(#1)}
\newcommand\twovec[2]{\Bigl(\negthinspace\begin{smallmatrix}#1\\#2\end{smallmatrix}\Bigr)}
\newcommand\threevec[3]{\Bigl(\negthinspace\begin{smallmatrix}#1\\#2\\#3\end{smallmatrix}\Bigr)}
\newcommand\pred{\mathit{pred}}
\newcommand\set[1]{\{#1\}}
\newcommand\powerset{{\cal P}}
\newcommand\calA{{\cal A}}
\newcommand\calB{{\cal B}}
\newcommand\calC{{\cal C}}
\newcommand\calD{{\cal D}}
\newcommand\dataset{\calD}
\newcommand\natnum{\mathbb{N}}
\def\expspace{\textsc{ExpSpace}}
\def\ptime{\textsc{PTime}}
\def\D{\mathbb{D}}
\tikzstyle{clear} = [draw=none,fill=none]
\begin{document}

\title{Weak and Nested Class Memory Automata}
\titlerunning{Weak and Nested Class Memory Automata}

\toctitle{Weak and Nested Class Memory Automata}
\tocauthor{Conrad~Cotton-Barratt, Andrzej~Murawski, and Luke~Ong}

\author{Conrad Cotton-Barratt\inst{1}\fnmsep\thanks{Supported by an EPSRC Doctoral Training Grant}, Andrzej S. Murawski\inst{2}\fnmsep\thanks{Supported by EPSRC (EP/J019577/1)}, \and C.-H. Luke Ong\inst{1},\fnmsep\thanks{Partially supported by Merton College Research Fund}}
\institute{Department of Computer Science, University of Oxford, UK 
\email{$\{$conrad.cotton-barratt,luke.ong$\}$@cs.ox.ac.uk}
\and Department of Computer Science, University of Warwick, UK
\email{a.murawski@warwick.ac.uk}}
\authorrunning{C. Cotton-Barratt, A. S. Murawski, and C.\,-H.\,L. Ong}

\maketitle
\setcounter{footnote}{0}

\thispagestyle{fancy}

\begin{abstract}
Automata over infinite alphabets have recently come to be studied extensively as potentially useful tools for solving problems in verification and database theory.  One popular model of automata studied is the Class Memory Automata (CMA), for which the emptiness problem is equivalent to Petri Net Reachability.  
We identify a restriction -- which we call weakness -- of CMA, and show that
they are equivalent to three existing forms of automata over data languages.
Further, we show that in the deterministic case they are closed under all Boolean operations, and hence have an \expspace-complete equivalence problem.
We also extend CMA to operate over multiple levels of nested data values, and show that while these have undecidable emptiness in general, adding the weakness constraint recovers decidability of emptiness, via reduction to coverability in well-structured transition systems.  We also examine connections with existing automata over nested data.
\end{abstract}

\section{Introduction}\label{sec:introduction}

A data word is a word over a finite alphabet in which every position in the word also has an associated \emph{data value}, from an infinite domain.  Data languages provide a useful formalism both for problems in database theory and verification \cite{NSV04,Segoufin06,bjorklund10}.  For example, data words can be used to model a system of a potentially unbounded number of concurrent processes: the data values are used as identifiers for the processes, and the data word then gives an interleaving of the actions of the processes.  Having expressive, decidable logics and automata over data languages then allows properties of the modelled system to be checked.  

Class memory automata (CMA) \cite{bjorklund10} are a natural form of automata over data languages.  CMA can be thought of as finite state machines extended with the ability, on reading a data value, to remember what state the automaton was in when it last saw that data value.  A run of a CMA is accepting if the following two conditions hold:
\begin{inparaenum}[(i)]
\item the run ends in a \emph{globally accepting} state; and
\item each data value read in the run was last seen in a \emph{locally accepting} state.
\end{inparaenum}
If using data values to distinguish semi-autonomous parts of a system, while the first condition can check the system as a whole has behaved correctly, the second of these conditions can be used to check that each part of the system independently behaved correctly.  The emptiness problem for class memory automata is equivalent to Petri net reachability, and while closed under intersection, union, and concatenation, they are not closed under complementation, and do not have a decidable equivalence problem.  

We earlier described how data words can be used to model concurrent systems: each process can be identified by a data value, and class memory automata can then verify properties of the system.  
What happens when these processes can spawn subprocesses, which themselves can spawn subprocesses, and so on?  In these situations the parent-child relationship between processes becomes important, and a single layer of data values cannot capture this; instead we want a notion of \emph{nested} data values, which themselves contain the parent-child relationship.  
In fact, such nested data values have applications beyond just in concurrent systems: they are prime candidates for modelling many computational situations in which names are used hierarchically.  This includes  higher-order computation where intermediate functional values are being created and named, and  later used by referring to these names. More generally, this feature is characteristic of numerous encodings into the $\pi$-calculus~\cite{San93}.

This paper is concerned with finding useful automata models which are expressive enough to decide properties we may wish to verify, as well as having good closure and decidability properties, which make them easy to abstract our queries to. 
We study a restriction of class memory automata, which we find leads to improved complexity and closure results, at the expense of expressivity.  We then extend class memory automata to a nested data setting, and find a decidable class of automata in this setting.

\bigskip

\noindent
\emph{Contributions.}
In Section \ref{sec:wcma} we identify a natural restriction of Class Memory Automata, which we call \emph{weak} Class Memory Automata, in which the local-acceptance condition of CMA is dropped.  
We show that these weak CMA are equivalent to: 
\begin{inparaenum}[(i)]
\item Class Counting Automata, which were introduced in~\cite{ManuelR09}; 
\item non-reset History Register Automata, introduced in~\cite{TzevelekosG13}; and
\item locally prefix-closed Data Automata, introduced in~\cite{Decker14}.
\end{inparaenum}

These automata have an \expspace-complete emptiness problem.  The primary advantage of having this equivalent model as a kind of Class Memory Automaton is that there is a natural notion of determinism, and we show that Deterministic Weak CMA are closed under all Boolean operations (and hence have decidable containment and equivalence problems).  

In Section \ref{sec:ndcma} we extend CMA to multiple levels of ``nested data''.  This extension is Turing-powerful in general, but reintroducing the Weakness constraint recovers decidability.  We show how these Nested Data CMA recognise the same string languages as Higher-Order Multicounter Automata, introduced in \cite{BjorklundB07}, and also how the weakness constraint corresponds to a natural weakness constraint on these Higher-Order Multicounter Automata. Finally, we show these automata to be equivalent to the Nested Data Automata introduced in \cite{Decker14}.

\bigskip

\noindent
\emph{Related Work.}
Class memory automata are equivalent to data automata (introduced in \cite{BojanczykMSSD06}), though unlike data automata, they admit a notion of determinism.  Data automata (and hence class memory automata) were shown in \cite{BojanczykMSSD06} to be equiexpressive with the two-variable fragment of existential monadic second order logic over data words.   Temporal logics have also been studied over data words \cite{DemriL09}, and the introduction of locally prefix-closed data automata and of nested data automata in \cite{Decker14} is motivated by extensions to BD-LTL, a form of LTL over multiple data values introduced in \cite{KaraSZ10}.

Fresh register automata \cite{Tzevelekos11} are a precursor to the History Register automata \cite{TzevelekosG13} which we examine a restriction of in this paper.
Class counting automata, which we show to be equivalent to weak CMA in this paper, have been extended to be equiexpressive with CMA by adding resets and counter acceptance conditions \cite{ManuelR09,Manuel11}.


We note that our restriction of class memory automata, which we call weak class memory automata, sound similar to the weak data automata introduced in \cite{KaraST12}.  However, these are two quite different restrictions, with emptiness problems of different complexities, and the two automata models should not be confused.

In the second part of this paper we examine automata over nested data values.  First-order logic over nested data values has been studied in \cite{bjorklund10}, where it was shown that the $<$ predicate quickly led to undecidability, but that only having the $+1$ predicate preserved decidability.  They also examined the link between nested data and shuffle expressions.  
In \cite{Decker14} Decker et al. introduced ND-LTL, extending BD-LTL to nested data values.  To show decidability of certain fragments of ND-LTL they extended data automata to run over nested data values, giving the nested data automata we examine in this paper. 
We note that the nested systems we introduce and study in this paper are all encodable in nested Petri nets \cite{LomazovaS99}.


\section{Preliminaries}

Let $\Sigma$ be a finite alphabet, and $\dataset$ be an infinite set of data values.  A \emph{data alphabet}, $\D$, is of the form $\Sigma \times \dataset$.  The set of finite data words over $\D$ is denoted $\D^*$.
The string-projection of a word in $\D^*$ is the projection to its $\Sigma$-values. We write this function $str()$, and extend it to languages over data words in the natural way.
In what follows we define the automata models that will be discussed in the paper.


\textbf{Class Memory Automata and Data Automata}.
Given a set $S$, we write $S_\bot$ to mean $S \cup \set{\bot}$, where $\bot$ is a distinguished symbol (representing a fresh data value). 
A \emph{Class Memory Automaton} \cite{bjorklund10} is a tuple $\anglebra{Q, \Sigma, q_I, \delta, F_L, F_G}$ where $Q$ is a finite set of states, $\Sigma$ is a finite alphabet, $q_I \in Q$ is the initial state, $F_G \subseteq F_L \subseteq Q$ are sets of globally- and locally-accepting sets (respectively), and $\delta$ is the transition map
$\delta : Q \times \Sigma \times Q_\bot \to \powerset(Q)$.  The automaton is deterministic if each set in the image of the transition function is a singleton.
A \emph{class memory function} is a map $f : \dataset \to Q_\bot$ such that $f(d) \neq \bot$ for only finitely many $d \in \dataset$. We view $f$ as a record of the history of computation: it holds the state of the automaton after the data value $d$ was last read, where $f(d) = \bot$ means that $d$ is fresh. A configuration of the automaton is a pair $(q, f)$ where $q \in Q$ and $f$ is a class memory function. The initial configuration is $(q_0, f_0)$ where $f_0(d) = \bot$ for every $d \in \dataset$. Suppose $(a, d) \in \Sigma \times \dataset$ is the input. The automaton can transition from configuration $(q, f)$ to configuration $(q', f')$ just if $q' \in \delta(q, a, f(d))$ and $f' = f[d \mapsto q']$. A data word $w$ is \emph{accepted} by the automaton just if the automaton can make a sequence of transitions from the initial configuration to a configuration $(q,f)$ where $q \in F_G$ and $f(d) \in F_L \cup \set{\bot}$ for every data value $d$.

A \emph{Data Automaton} \cite{BojanczykMSSD06} is a pair $(\calA, \calB)$ where $\calA$ is a letter-to-letter string transducer with output alphabet $\Gamma$, called the Base Automaton, and $\calB$ is a NFA with input alphabet $\Gamma$, called the Class Automaton.  A data word $w = w_1 \dots w_n \in \D^*$ is accepted by the automaton if there is a run of $\calA$ on the string-projection of $w$ (to $\Sigma$) with output $b_1 \dots b_n$ such that for each maximal set of positions $\set{x_1, \dots , x_k} \subseteq \set{1, \dots ,n}$ such that $w_{x_1}, \dots, w_{x_k}$ share the same data value, the word $b_{x_1} \dots b_{x_k}$ is accepted by $\calB$.

CMA and DA are expressively equivalent, with \ptime\ translation \cite{bjorklund10}.  The emptiness problem for these automata is decidable, and equivalent to Petri Net Reachability \cite{BojanczykMSSD06}.  The class of languages recognised by CMA is closed under intersection, union, and concatenation.  It is not closed under complementation or Kleene star.  Of the above, the class of languages recognised by deterministic CMA is closed only under intersection.



\textbf {Locally Prefix-Closed Data Automata.}
A Data Automaton $\calD = (\calA, \calB)$ is locally prefix-closed (pDA) \cite{Decker14} if all states in $\calB$ are final.  
The emptiness problem for pDA is \expspace-complete \cite{Decker14}.


\textbf{Class Counting Automata.}
 A \emph{bag} over $\calD$ is a function $h : \calD \to \natnum$ such that $h(d) = 0$ for all but finitely many $d \in \calD$.  
Let $C = \set{=, \neq, <, >} \times \natnum$, which we call the set of constraints.  If $c = (\texttt{op}, e) \in C$ and $n \in \natnum$ we write $n \vDash c$ iff $n \: \texttt{op} \: e$.
A \emph{Class Counting Automaton} (CCA) \cite{ManuelR09} is a tuple $\anglebra{Q, \Sigma, \Delta, q_0, F}$ where $Q$ is a finite set of states, $\Sigma$ is a finite alphabet, $q_0$ is the initial state, $F \subseteq Q$ is the set of accepting states, and $\Delta$, the transition relation, is a finite subset of $Q \times \Sigma \times C \times \set{\uparrow^+ , \downarrow} \times \natnum \times Q$.
A configuration of a CCA, $\calC = \anglebra{Q, \Sigma, \Delta, q_0, F}$, is a pair $(q, h)$ where $q \in Q$ and $h$ is a bag.  The initial configuration is $(q_0, h_0)$ where $h_0$ is the zero function.  Given a data word $w = (a_1, d_1) (a_2 , d_2) \dots (a_n, d_n)$ a run of $w$ on $\calC$ is a sequence of configurations $(q_0, h_0) (q_1, h_1) \dots (q_n , h_n)$ such that for all $0 \leq i < n$ there is a transition $(q,a,c,\pi,m,q')$ where $q = q_i$, $q' = q_{i+1}$, $a = a_{i+1}$, $h_i (d_{i+1}) \vDash c$, and
$$h_{i+1} =  \begin{cases} h_i[d_{i+1} \mapsto h_i(d_{i+1}) + m] &\mbox{ if } \pi = \uparrow^+ \\ h_i[d_{i+1} \mapsto m] &\mbox{ if } \pi = \downarrow \end{cases}$$
The run is accepting if $q_n \in F$.
The emptiness problem for Class Counting Automata was shown to be \expspace-complete in \cite{ManuelR09}.


\textbf{Non-Reset History Register Automata.}\footnote{We provide a simplified definition to that provided in \cite{TzevelekosG13}, since we do not need to consider full History Register Automata.  In particular, due to Proposition 22 in \cite{TzevelekosG13}, we need only consider histories, and not registers.}
For a positive integer $k$ write $[k]$ for the set $\set{1, 2 \dots, k}$. Fixing a positive integer $m$, define the set of labels \textsf{Lab} $= \powerset([m])^2$. 
A non-reset History Register Automaton (nrHRA) of type $m$ with initially empty assignment is a tuple $\calA = \anglebra{Q, \Sigma, \delta, q_0, F}$ where $q_0 \in Q$ is the initial state, $F \subseteq Q$ is the set of final states, and $\delta\subseteq Q \times \Sigma \times \textsf{Lab} \times Q$.  
A configuration of $\calA$ is a pair $(q,H)$ where $q \in Q$ and $H : [m] \rightarrow \powerset_\mathit{fn}(\calD)$ where $\powerset_\mathit{fn} (\calD)$ is the set of finite subsets of $\calD$.  We call $H$ an \emph{assignment}, and for $d \in \calD$ we write $H^{-1}(d)$ for the set $\set{i \in [m] : d \in H(i)}$.  The initial configuration is $(q_0, H_0)$, where $H_0$ assigns every integer in $[m]$ to the empty set.  
When the automaton is in configuration $(q,H)$, on reading input $\twovec{a}{d}$ it can transition to configuration $(q',H')$ providing there exists $X \subseteq [m]$ such that $(q,a,(H^{-1}(d),X),d) \in \delta$ and $H'$ is obtained by removing $d$ from $H(i)$ for each $i$ then adding $d$ to each $H(i)$ such that $i \in X$.  
A run is defined in the usual way, and a run is accepting if it ends in a configuration $(q,H)$ where $q \in F$.


\textbf{Higher-Order Multicounter Automata.} 
A multiset over a set $A$ is a function $m : A \rightarrow \natnum$.  A level-1 multiset over $A$ is a finite multiset over $A$.  A level-$(k+1)$ multiset over $A$ is a finite multiset of level-$k$ multisets over $A$.
We can visualise this with nested set notation: e.g. $ \set{ \set{a,a} , \set{} , \set{}}$ represents the level-2 multiset containing one level-1 multiset containing two copies of $a$, and two empty level-1 multisets.
A multiset is \emph{hereditarily empty} if, written in nested set notation, it contains no symbols from $A$.

Higher-Order Multicounter Automata (HOMCA) were introduced in \cite{BjorklundB07}, and their emptiness problem was shown to be Turing-complete at level-2 and above. 
A level-$k$ multicounter automaton is a tuple $\anglebra{Q, \Sigma, A, \Delta, q_0, F}$ where $Q$ is a finite set of states, $\Sigma$ is the input alphabet, $A$ is the multiset alphabet, $q_0$ is the initial state, and $F$ is the set of final states.  A configuration is a tuple $(q, m_1, m_2, \dots, m_k)$ where $q \in Q$ and each $m_i$ is either undefined ($\bot$) or a level-$i$ multiset over $A$.  The initial configuration is $(q_0, \bot, \dots, \bot)$.
$\Delta$ is the transition relation, and is a subset of $Q \times \Sigma \times ops \times Q$ where $ops$ is the set of possible counter operations.  These operations, and meanings, are as follows:
\begin{inparaenum}[(i)]
\item $new_i$ ($i \leq k$) turns $m_i$ from $\bot$ into the empty level-$i$ multiset;
\item $inc_a$ ($a \in A$) adds $a$ to $m_1$;
\item $dec_a$ ($a \in A$) removes $a$ from $m_1$;
\item $store_i$ ($ i < k$) adds $m_i$ to $m_{i+1}$ and sets $m_i$ to $\bot$;
\item $load_i$ ($i < k$) non-deterministically removes an $m$ from $m_{i+1}$ and turns $m_i$ from $\bot$ to $m$.  This can happen only when $m_1 \dots m_i$ are all $\bot$.
\end{inparaenum}
The automaton reads the input word from left to right, updating $m_1 \dots m_k$ as determined by the transitions.  A word is accepted by the automaton just if there is a run of the word such that the automaton ends up in configuration $(q, m_1, \dots, m_k)$ where $q \in F$ and each $m_i$ is hereditarily empty.

\section{Weak Class Memory Automata}\label{sec:wcma}
In this section we introduce a restriction of class memory automata, \emph{weak} class memory automata (WCMA), and discuss the improved closure and complexity properties.  We also show that 
WCMA correspond to a natural restriction of data automata, locally-prefix closed data automata, as well as two other independent automata models, class counting automata and non-reset history register automata.

\begin{definition}
A class memory automaton $\anglebra{Q, \Sigma, \Delta, q_0, F_L, F_G}$ is \emph{weak} if all states are locally accepting (i.e. $F_L = Q$).  
\end{definition}
When defining a weak CMA (WCMA) we may omit the set of locally accepting states, and just give one set of final states, $F$.

The emptiness problem for class memory automata is reducible (in fact, equivalent) to emptiness of multicounter automata (MCA) \cite{BojanczykMSSD06,bjorklund10}.  This reduction works by using counters to store the number of data values last seen in each state.  The local-acceptance condition is checked by the zero-test of each counter at the end of 
a run of an MCA.  In the weak CMA case, this check is no longer necessary, and so emptiness is reducible  to emptiness of weak MCA.  
Just as MCA emptiness is equivalent to Petri net reachability, weak MCA emptiness is equivalent to Petri net coverability.  
\begin{example}\label{ex:wcma-pncoverability}
We give an example showing how a very simple Petri net reachability query can be reduced to an emptiness of CMA problem, and the small change required to reduce coverability queries to emptiness of WCMA.
The idea is to encode tokens in the Petri net using data values: the location of the token is stored by the class memory function's memory for the data value.  Transitions in the Petri net will be simulated by sequences of transitions in the automaton, which change class memory function appropriately.  
Consider the Petri net shown in Figure~\ref{fig:example-pn}, with initial marking on the left and target marking on the right.
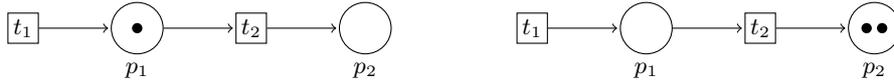
\begin{figure}[h]
   \vspace{-10pt}
\begin{tikzpicture}[node distance=1.5cm]
\node [place,tokens=1] (s1) [label=below:$p_1$]   {}   ;
\node [clear] (space) [right of=s1] {};
\node [place] (s2) [right of=space,label=below:$p_2$] {};

\node [transition] (e1) [right of=s1] {$t_2$}
      edge [pre]                  (s1)
      edge [post]                (s2);
\node [transition] (e2) [left of=s1] {$t_1$}
      edge [post]                  (s1);
            
\node [place] (p1) [right=3cm of s2,label=below:$p_1$]    {}   ;
\node [clear] (space2) [right of=p1] {};
\node [place,tokens=2] (p2) [right of=space2,label=below:$p_2$] {};

\node [transition] (f1) [right of=p1] {$t_2$}
      edge [pre]                  (p1)
      edge [post]                (p2);
\node [transition] (f2) [left of=p1] {$t_1$}
      edge [post]               (p1);
\end{tikzpicture}
   \vspace{-10pt}
\caption{An example Petri net with initial and target markings.}
\label{fig:example-pn}
   \vspace{-20pt}
\end{figure}
\end{example}
We give the automaton which models this reachability query in Figure~\ref{fig:example-cma}.  The first transitions from the initial state just set up the initial marking.  As there is only one token in the initial marking, this just involves reading one fresh data value: this is the transition from $s_0$ to $s_1$ below.  Once the initial marking has been set up (reaching $s_2$ below), the automaton can simulate the transitions firing any number of times.  Each loop from $s_2$ back to itself represents one transition in the Petri net firing: the loop above represents $t_1$ firing, and the loop below represents $t_2$ firing.  For $t_1$ to fire, no preconditions must be met, and a new data value can be read in state $s_3$, thus data values last seen in either of states $s_1$ and $s_3$ represent tokens in $p_1$.    For $t_2$ to fire, a token must be removed from $p_1$, since tokens in $p_1$ are represented by tokens in either $s_1$ or $s_3$, the first transition in this loop -- to $s_4$ -- involves reading a data value last seen in one of these states.  Thus data values seen in $s_4$ represent removed tokens, which we do not use again.  Then a new token is placed in $p_2$ by reading a fresh data value in $s_5$.  Once back in $s_2$ these loops can be taken more, or the fact that a marking covering the target marking has been reached can be checked by reading two data values last seen in $s_5$ to reach a final state.  The only globally accepting state is $s_7$, and all states except those which are used to represent tokens -- i.e. all except $s_1$, $s_3$, and $s_5$ -- are locally accepting.  The local acceptance condition thus checks that no other tokens remain in the simulated Petri net.
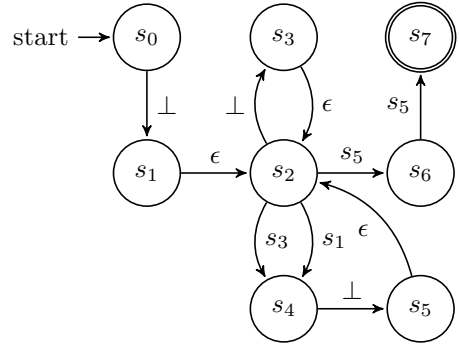
\begin{wrapfigure}{r}{0.5\textwidth}
   \vspace{-15pt}
\begin{tikzpicture}[->,>=stealth',shorten >=1pt,auto,node distance=1.8cm, semithick]]
\node [state,initial] (s0)  {$s_0$};
\node [state] (s1)  [below of=s0] {$s_1$};
\node [state] (s2)  [right of=s1] {$s_2$};
\node [state] (s4)  [below of=s2] {$s_4$};
\node [state] (s5)  [right of=s4] {$s_5$};
\node [state] (s3)  [above of=s2] {$s_3$};
\node [state] (s6)  [right of=s2] {$s_6$};
\node [state,accepting] (s7) [above of=s6] {$s_7$};

\path (s0) edge    node {$\bot$} (s1)
      (s1) edge    node {$\epsilon$} (s2)
      (s2) edge  [bend left]  node {$s_1$} (s4)
      (s2) edge  [bend right]  node {$s_3$} (s4)
      (s4) edge    node {$\bot$} (s5)
      (s5) edge  [bend right]  node {$\epsilon$} (s2)
      (s2) edge  [bend left]  node {$\bot$} (s3)
      (s3) edge  [bend left]  node {$\epsilon$} (s2)
      (s2) edge    node {$s_5$} (s6)
      (s6) edge    node {$s_5$} (s7);
\end{tikzpicture}
   \vspace{-10pt}
\caption{A class memory automaton simulating the Petri net query shown in Figure \ref{fig:example-pn}.}
\label{fig:example-cma}
   \vspace{-25pt}
\end{wrapfigure}

If we were interested in a coverability query, the same automaton, but without the local acceptance condition, would obviously suffice.  Thus emptiness of WCMA is equivalent to Petri net coverability, which is \expspace-complete.


We now give the main observation of this section: that weak CMA are equivalent to three independent existing automata models.
\begin{theorem}
Weak CMA, locally prefix-closed DA, class counting automata, and non-reset history register automata are all \ptime-equivalent. 
\end{theorem}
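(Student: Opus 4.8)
The plan is to route every equivalence through weak CMA, exploiting the common shape of all four models: a finite global control together with a per-data-value memory that is read, tested and updated like a small automaton but carries \emph{no acceptance condition of its own} --- this is exactly the weakness. So I would prove that WCMA are \ptime-inter-translatable with each of pDA, CCA and nrHRA separately, and then note that \ptime\ reductions compose, which gives \ptime-equivalence of all four.

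For WCMA versus pDA I would reuse the known polynomial translation between CMA and DA of~\cite{bjorklund10} and just track the effect of the restrictions. From CMA to DA, the base transducer guesses the run and outputs the transition used at each position, and the class automaton $\calB$ checks, along each class, both that the guessed transitions chain up correctly ($\bot$ at the start; the previous-state field of each transition equal to the target of the previous transition of that class) and that the last class-memory state is locally accepting. When the CMA is weak the second test vanishes, so $\calB$ only verifies chaining, which it does with states $Q\cup\set{\iota}$ (where $\iota$ marks ``nothing read yet''), \emph{all} of them final, and simply \emph{no} transition on an inconsistent symbol --- the NFA run then dies without consuming the whole class word. That is precisely a locally prefix-closed DA. Conversely, from DA to CMA the class-memory function records $\calB$'s state after the last occurrence of each data value and a $\calB$-transition is guessed at each step; if $\calB$ has all states final the final local test is vacuous, so the CMA is weak. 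Both directions are polynomial.

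For the other two pairs the uniform idea is to fold the per-data-value memory into the class-memory function of a WCMA, using the standard trick of tagging each control state with the memory value it has just written: a WCMA state becomes a pair $\anglebra{q,v}$, on reading $(a,d)$ one consults the control component of the current state and the value component of $f(d)$ (ignoring the other component of each), and one writes the new pair into $f(d)$. For nrHRA the value is the label $H^{-1}(d)\in\powerset([m])$, with $\emptyset$ identified with $\bot$; since a data value's label is always $\emptyset$ or one of the $\le|\delta|$ sets occurring as an update target, this WCMA is polynomial in the nrHRA, while conversely a WCMA with state set $Q$ is simulated by an nrHRA of type $m=|Q|$ that keeps each data value in exactly the singleton history naming its recorded state. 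For CCA the value is the bag count $h(d)\in\natnum$, with $0$ identified with $\bot$; the WCMA-to-CCA direction is immediate (number the states, use only $=$-constraints and $\downarrow$-resets; CCA imposes no condition on the bag, matching weakness), and for CCA-to-WCMA one first replaces the unbounded count by a finite abstract value in $\set{0,1,\dots,E}\cup\set{{>}E}$, where $E$ is the largest constant appearing in any constraint.

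The main obstacle is showing this counter abstraction is sound --- that ``observationally equal'' really collapses to finitely many counts. One checks that every constraint has a fixed truth value once the count exceeds $E$ (including the boundary constant $e=E$ and the $\neq$ case), that an increment sends a non-saturated count to another element of $\set{0,\dots,E}$ or into the saturated region and keeps a saturated count saturated, and that a reset lands at a constant that is itself either $\le E$ or already saturated; since the per-data-value counters evolve independently given the global run, the abstract transition system then simulates the concrete one exactly. The remaining delicate point is the book-keeping at the first read of each data value, where the fresh marker $\bot$ must be correctly identified with abstract count $0$ (respectively label $\emptyset$). I would also flag that the abstract domain has size linear in the magnitude of the constants, so CCA-to-WCMA is polynomial under the usual magnitude-respecting (unary) encoding of CCA constants; with that convention every construction above is polynomial, and composing around the WCMA hub yields the claimed \ptime-equivalence. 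The directions not spelled out --- WCMA-to-CCA, WCMA-to-nrHRA, and both directions between WCMA and pDA --- are routine once the shared template is in place.
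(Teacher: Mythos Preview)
Your proposal is correct and follows essentially the same route as the paper: all equivalences are routed through WCMA, the pDA translation is obtained by dropping the local-acceptance check in the standard CMA--DA translation, CCA are handled by the finite counter-abstraction $\{0,\dots,E,{>}E\}$ (with the same unary-constants caveat the paper notes), and nrHRA are handled by storing the label $H^{-1}(d)$ in the class-memory function and, conversely, using one history per WCMA state. The only cosmetic difference is that the paper cites~\cite{TzevelekosG13} for the nrHRA-to-CMA direction rather than spelling out the pair-state construction you give, and does not make explicit your observation that only the $\le|\delta|$ update-target labels are reachable.
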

\begin{proof}
That Weak CMA and pDA are equivalent is a simple alteration of the proof of equivalence of CMA and DA provided in \cite{bjorklund10}.  

Recall that CCA use a ``bag'', which essentially gives a counter for each data value.  Weak CMA can easily be simulated by CCA by identifying each state with a natural number; then the bag can easily simulate the class memory function, by setting the data value's counter to the appropriate number when it is read.  To simulate a CCA with a WCMA, we first observe that for any CCA, since counter values can only be incremented or reset, there is a natural number, $N$, above which different counter values are indistinguishable to the automaton.  Thus we need only worry about a finite set of values.  This means the value for the counter of each data value can be stored in the automaton state, and thereby the class memory function.  A full proof is provided in Appendix \ref{appendix:wcma-cca}.

In \cite{TzevelekosG13} the authors already show that nrHRA can be simulated by CMA.  Their construction does not make use of the local-acceptance condition, so the fact that nrHRA can be simulated by WCMA is immediate.  In order to simulate a given WCMA with state set $[m]$, one can take a nrHRA of type $m$, 
with place $i$ storing the data values last seen in state $i$.
\end{proof}
Data automata, and hence pDA, unlike CMA and WCMA, do not have a natural notion of determinism, nor a natural restriction corresponding to deterministic CMA or WCMA.  
What about for CCA?  
We define CCA to be deterministic if for each state $q$ and input letter $a$, the transitions $(q, a, c, \dots) \in \Delta$ are such that the $c$'s partition $\natnum$.  
The translations provided in Appendix \ref{appendix:wcma-cca} also show that deterministic WCMA and deterministic CCA are equivalent.  
We can ask the same question of non-reset HRA.
We find that the natural notion of determinism here is:  for each $q \in Q, a \in \Sigma$, and $X \subseteq [m]$ there is precisely one $Y \subseteq [m]$ and $q' \in Q$ such that $(q,a,(X,Y),q') \in \delta$. Similarly, the translations discussed above show deterministic WCMA to be equivalent to deterministic nrHRA.


It follows from the results for CCA in \cite{ManuelR09} that Weak CMA, like normal CMA, are closed under intersection and union, though these closures can easily be shown directly using product constructions.  In fact, Deterministic Weak CMA have even nicer closure properties:
\begin{proposition}
Deterministic Weak CMA are closed under all Boolean operations.
\end{proposition}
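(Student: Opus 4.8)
The plan is to establish closure under intersection, union, and complementation, since these generate all Boolean operations. Closure under intersection and union for deterministic WCMA follows from the standard product construction: given two deterministic WCMA $\calA_1 = \anglebra{Q_1, \Sigma, \delta_1, q_0^1, F_1}$ and $\calA_2 = \anglebra{Q_2, \Sigma, \delta_2, q_0^2, F_2}$, one forms the product automaton on state set $Q_1 \times Q_2$, running both automata simultaneously. The crucial point to check is that a single class memory function over $Q_1 \times Q_2$ faithfully tracks the pair of class memory functions: when a data value $d$ is read, the product's memory for $d$ should be the pair of the components' memories, and since both machines read the same data word this is maintained inductively. Determinism of the product is immediate from determinism of the components. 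One then sets the final set to $F_1 \times F_2$ for intersection and to $(F_1 \times Q_2) \cup (Q_1 \times F_2)$ for union.

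The heart of the proposition is closure under complementation, and this is exactly where weakness and determinism both matter. The key observation is that a deterministic WCMA is \emph{total} up to a routine completion: for each state $q$, letter $a$, and each $p \in Q_\bot$ there is (after adding a sink state if necessary) exactly one successor, so on every data word there is exactly one run, and it is infinite-free, i.e. defined on the whole word. Now I would argue that for a deterministic WCMA, the acceptance condition of the run of $w$ depends \emph{only} on the final global state reached — because weakness means $F_L = Q$, so the local acceptance condition $f(d) \in F_L \cup \set{\bot}$ is vacuously satisfied for every data value. Therefore, after completing the automaton to a total one, complementation is achieved simply by swapping $F$ and $Q \setminus F$: the complemented automaton is still a deterministic WCMA, its unique run on $w$ ends in $F$ iff the original's unique run ends in $Q \setminus F$, hence it accepts exactly the complement language.

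The main obstacle, and the only place that needs care, is the completion step: one must verify that adding a sink state preserves both determinism and weakness while not changing the language, and that the sink state is placed in $Q \setminus F$ (so that after swapping it becomes accepting, which is fine — the completion of a deterministic WCMA to a total one is language-preserving precisely because the missing transitions correspond to rejection). I would also note explicitly that this argument breaks for non-weak CMA: there the local acceptance condition genuinely constrains the class memory function, so swapping $F_G$ does not complement the language, and indeed (non-deterministic, or even deterministic) CMA are not closed under complement. Thus the combination of determinism (unique run) and weakness (acceptance is a property of the final state alone) is exactly what makes the standard automata-theoretic complementation go through.
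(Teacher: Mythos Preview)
Your proposal is correct and follows essentially the same approach as the paper, which simply states that intersection and union follow from product constructions and complementation from swapping final states as for DFA. You have supplied the details (completion to a total automaton, the role of weakness in making acceptance depend only on the final state) that the paper leaves implicit.
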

\begin{proof}
Closure under intersection and union can be shown by product constructions.  For complementation one can use the same method as for DFA: complementing the final states.
\end{proof}

\begin{corollary}
The containment and equivalence problems for Deterministic Weak CMA are \expspace-complete.
\end{corollary}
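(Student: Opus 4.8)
The plan is to establish both an \expspace\ upper bound and a matching \expspace\ lower bound; together with decidability of emptiness these give \expspace-completeness of both problems.

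For the \textbf{upper bound} I would reduce containment to emptiness of a weak CMA. Given deterministic weak CMA $\calA$ and $\calB$, the preceding proposition (and the polynomial product-with-complement construction behind it) yields a deterministic weak CMA $\calC$, of size polynomial in $|\calA| + |\calB|$, with $\lang{\calC} = \lang{\calA} \setminus \lang{\calB}$: complete the transition function of $\calB$ with a sink state, swap its globally accepting states, and take the synchronous product with $\calA$ (the class memory function of the product ranges over pairs of states, and ``fresh'' is consistent across the two components). Then $\lang{\calA} \subseteq \lang{\calB}$ iff $\lang{\calC} = \emptyset$, and emptiness of a weak CMA is in \expspace\ since, as noted above, it is equivalent to Petri net coverability. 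Equivalence is then two containment tests, hence also in \expspace.

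For the \textbf{lower bound} I would show that non-emptiness is already \expspace-hard for \emph{deterministic} weak CMA, by a polynomial reduction from Petri net coverability. The idea is the one behind Example~\ref{ex:wcma-pncoverability}, but arranged so that the automaton only \emph{verifies} a run rather than guessing it: the input data word spells out a candidate computation of the net $N$, each token is a distinct data value, the class memory function records in which place that token currently sits (or that it has been consumed), and the $\Sigma$-letters drive a protocol ``set up $M_0$'', ``fire $t$'' (consume the preset, then create the postset, one supplied data value per step), and finally ``check $M_t$'' (for each place $p$, read $M_t(p)$ distinct data values currently recording $p$). On each supplied data value the automaton merely checks it is in the expected state (fresh for a created token, recording $p$ for a token consumed from $p$, and so on) and advances the protocol, otherwise moving to a single reject sink; giving the control letters their own fresh data values keeps every transition a function, so $\calA_N$ is deterministic, and it is plainly weak and polynomial in $|N|$. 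Crucially, because there is no local-acceptance condition, the final check only requires \emph{at least} $M_t(p)$ tokens in each $p$, so $\lang{\calA_N} \neq \emptyset$ exactly when some marking reachable in $N$ covers $M_t$. Finally, non-emptiness of a deterministic weak CMA reduces to non-equivalence, resp.\ non-containment, against a fixed trivial deterministic weak CMA recognising $\emptyset$; since \expspace\ is closed under complement, containment and equivalence are \expspace-hard.

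The step I expect to be the main obstacle is this reduction: one must check carefully that pushing all the net's nondeterminism (which transition fires, which token is consumed) into the input word genuinely leaves a deterministic automaton — every malformed continuation must be rejected by a move to the sink rather than by ``no transition'', and the bookkeeping for the protocol and the auxiliary control letters must not reintroduce branching — while keeping the state set polynomial in $|N|$.
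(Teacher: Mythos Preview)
Your proposal is correct and follows the route the paper has in mind: reduce containment to emptiness via the Boolean closure of the preceding proposition, and obtain hardness by reducing emptiness (hence coverability) to containment/equivalence. The paper in fact gives no proof at all for this corollary, treating both directions as immediate from the surrounding results; your write-up is therefore more detailed than the paper's, and in particular your care about the lower bound --- checking that the coverability-to-emptiness reduction of Example~\ref{ex:wcma-pncoverability} can be made deterministic by pushing all choices into the input word --- addresses a point the paper leaves implicit.
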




\section{Nested Data Class Memory Automata}\label{sec:ndcma}

In Section \ref{sec:introduction} we discussed how data values fail to provide a good model for modelling computations in which names are used hierarchically, such as a system of concurrent processes which can spawn subprocesses.
Motivated by these applications, in this section we introduce a notion of nested data values in which the data set has a forest-structure.  This is a stylistically different presentation 
to earlier work on nested data in that~\cite{bjorklund10,Decker14} require that each position in the words considered have a data value in each of a fixed number of levels.  By giving the data set a forest-structure, we can explicitly handle variable levels of nesting within a word.  However, we note that there is a natural translation between the two presentations.


\begin{definition}
A \emph{rooted tree} (henceforth, just \emph{tree}) is a simple directed graph $\anglebra{D, \pred}$, where $\pred : D \rightharpoonup D$ is the predecessor map defined on every node of the tree except the root, such that every node has a unique path to the root. A node $n$ of a tree has \emph{level} $l$ just if $\pred^{l-1}(n)$ is the root (thus the root has level 1). A tree has \emph{bounded level} just if there exists a least $l \geq 1$ such that every node has level no more than $l$; we say that such a tree has level $l$.

We define a \emph{nested dataset} $\anglebra{\dataset, \pred}$ to be a forest of infinitely many trees of level $l$ which is \emph{full} in the sense that for each data value $d$ of level less than $l$, 
$d$ has infinitely many children.
\end{definition}


We now extend CMA to nested data by allowing the nested data class memory automaton (NDCMA), on reading a data value $d$, to access the class memory function's memory of not only $d$, but each ancestor of $d$ in the nested data set.  Once a transition has been made, the class memory function updates the remembered state not only of $d$, but also of each of its ancestors.  Formally:


\begin{definition}
Fix a nested data set of level $l$.  A \emph{Nested Data CMA of level $l$} is a tuple $\anglebra{Q,\Sigma,\delta,q_0,F_L,F_G}$ where $Q$ is a finite set of states, $q_0 \in Q$ is the initial state, $F_G \subseteq F_L \subseteq Q$ are sets of globally and locally accepting states respectively, and $\delta$ is the transition map.  $\delta$ is given by a union $\delta = \bigcup_{1 \leq i \leq l} \delta_i$ where each $\delta_i$ is a function:
\[
\delta_i : Q \times \Sigma \times (\set{i} \times (Q_\bot)^{i}) \rightarrow \powerset (Q)
\]
The automaton is \emph{deterministic} if each set in the image of $\delta$ is a singleton; and is \emph{weak} if $F_L = Q$.
A configuration is a pair $(q,f)$ where $q \in Q$, and $f: \dataset \rightarrow Q_\bot$ is a class memory function (i.e. $f(d) = \bot$ for all but finitely many $d \in \dataset$).  The initial configuration is $(q_0, f_0)$ where $f_0$ is the class memory function mapping every data value to $\bot$.  A configuration $(q,f)$ is final if $q \in F_G$ and $f(d) \in F_L \cup \set{\bot}$ for all $d \in \dataset$.
The automaton can transition from configuration $(q,f)$ to configuration $(q',f')$ on reading input $\twovec{a}{d}$ just if $d$ is a level-$i$ data value, $q' \in \delta ( q, a, (i, f(pred^{i-1} (d)), \dots, f(pred(d)), f(d)))$, and $f' = f[d \mapsto q, pred(d) \mapsto q, \dots, pred^{i-1}(d) \mapsto q]$.
A run $(q_0, f_0), (q_1, f_1), \dots, (q_n, f_n)$ is accepting if the configuration $(q_n, f_n)$ is final.  $w \in L(\calA)$ if there is an accepting run of $\calA$ on $w$.
\end{definition}

It is clear that level-1 NDCMA are equivalent to normal CMA.  
We know that emptiness of class memory automata is equivalent to reachability in Petri nets; it is natural to ask whether there is any analogous correspondence -- to some kind of high-level Petri net -- once nested data is used. 

\begin{example}\label{ex:ndcma-pnwr}
In Example \ref{ex:wcma-pncoverability}, we showed how CMA (resp. WCMA) can encode Petri net reachability (resp. coverability).  A similar technique allows reachability (resp. coverability) of Petri nets with reset arcs to be reduced to emptiness of NDCMA (resp. weak NDCMA).  
The key idea is to have, for each place in the net, a level-1 data value -- essentially as a ``bag'' holding the tokens for that place. Nested under the level-1 data value, level-2 data values are used to represent tokens just as before.  
When a reset arc is fired, the corresponding level-1 data value is moved to a ``dead'' state -- from where it and the data values nested under it are not moved again -- and a fresh level-1 data value is then used to hold subsequently added tokens to that place.
\end{example}

\begin{theorem}\label{thm:ndcma-emptiness}
The emptiness problem for NDCMA is undecidable.  Emptiness of Weak NDCMA is decidable, but Ackermann-hard.
\end{theorem}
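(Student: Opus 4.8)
The plan is to treat the two assertions separately, in both cases going through the encoding of Petri nets with reset arcs sketched in Example~\ref{ex:ndcma-pnwr}. For undecidability of NDCMA emptiness, I would reduce from the reachability problem for Petri nets with reset arcs, which is known to be undecidable (Dufourd, Finkel and Schnoebelen). Given a reset net $N$ with target marking $M$, the construction of Example~\ref{ex:ndcma-pnwr} yields a level-$2$ NDCMA that simulates the firing of $N$'s transitions (level-$1$ data values playing the role of places, level-$2$ data values nested beneath them playing the role of tokens, and a ``dead'' state absorbing a reset place together with the tokens nested under it); using $F_G$ together with the local-acceptance set $F_L$ to insist that at the end each place holds exactly its $M$-quota, this automaton is nonempty iff $M$ is reachable in $N$. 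Hence emptiness is undecidable already for level-$2$ NDCMA. (One could alternatively route through the equivalence with level-$2$ Higher-Order Multicounter Automata, whose emptiness is Turing-complete.)

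For decidability of weak NDCMA emptiness, the key first observation is that weakness ($F_L = Q$) makes the local-acceptance condition in the definition of a final configuration vacuous: a weak NDCMA has an accepting run iff it can reach \emph{some} configuration whose control state lies in $F_G$. I would then abstract away data-value identities. For a configuration $(q,f)$, the set $\{d : f(d) \neq \bot\}$ is finite and, since reading a level-$i$ data value updates $f$ on all of its ancestors, closed under $\pred$; so it carries a finite forest of height $\le l$, which, labelled by the states that $f$ assigns and paired with $q$, is the abstract configuration, and by fullness of the nested dataset nothing else about $f$ matters. A move on input $\twovec{a}{d}$ with $d$ of level $i$ inspects the states of $d$ and all of its ancestors; for the move to be possible the non-$\bot$ entries of this tuple must form a prefix $s_1, \dots, s_k$ (activeness being $\pred$-closed), and on the abstract side the move amounts to: pick a root-to-level-$k$ branch labelled $s_1 \cdots s_k$, graft a fresh chain of nodes below its tip down to level $i$, and relabel the whole level-$1$-to-level-$i$ branch according to $\delta$ (fresh data values of any level being produced by growing the forest). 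This gives a finitely branching transition system that faithfully simulates the weak NDCMA.

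To conclude decidability I would exhibit this system as a well-structured transition system. Order abstract configurations by requiring identical control state together with a label-preserving, level-preserving embedding of forests that respects the multiset of children at each node; this is a well-quasi-order, obtained by iterating $l$ times the facts that finite products and finite multisets over a wqo are again wqos, with base case the finite set $Q$ under equality, and a final product with $Q$ for the control state. The transitions are monotone for this order: an NDCMA never resets a data value to $\bot$, so in a larger configuration the same move simply leaves the surplus material untouched while performing the identical grafting-and-relabelling on the (still present) branch, yielding a larger configuration; and a finite predecessor basis is computable. Since ``control state in $F_G$'' is exactly the upward closure of the finitely many abstract configurations $(q,\emptyset)$ with $q \in F_G$ (empty forest), emptiness of a weak NDCMA is a finite disjunction of coverability queries in this WSTS, and coverability in a WSTS is decidable by the backward-reachability algorithm of Abdulla et al.\ / Finkel--Schnoebelen.

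For the lower bound, the weak-NDCMA half of Example~\ref{ex:ndcma-pnwr} reduces coverability in reset Petri nets to emptiness of a level-$2$ weak NDCMA, and coverability for reset nets is Ackermann-hard (Schnoebelen); hence so is weak NDCMA emptiness. (A matching Ackermann upper bound would follow from the known length-function bounds for WSTS coverability, but is not needed for the statement.) The step I expect to be delicate is pinning down the abstract nested-multiset transition system precisely — in particular handling the simultaneous activation and relabelling of an entire ancestor chain when a deep fresh data value is read, and verifying that the resulting system both faithfully simulates the weak NDCMA and is monotone with respect to the forest-embedding wqo; everything else (the order being a wqo, decidability of WSTS coverability, and the two reset-net facts) is off the shelf.
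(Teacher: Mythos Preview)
Your proposal is correct and follows essentially the same route as the paper: undecidability and Ackermann-hardness via the reset-net encoding of Example~\ref{ex:ndcma-pnwr}, and decidability by casting weak NDCMA configurations as bounded-depth labelled forests and exhibiting a well-structured transition system. The only notable differences are cosmetic: the paper justifies the well-quasi-order by citing Meyer's result on forest homomorphisms rather than your iterated-multiset argument, and it additionally remarks that the theorem also follows from the equivalence with (p)NDA (Theorem~\ref{thm:NDCMA-pNDA}) combined with results of Decker et al.
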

\begin{proof}
This result follows from Theorem \ref{thm:NDCMA-pNDA} together with results in \cite{Decker14}, though we also provide a direct proof.

We show decidability by reduction to a well-structured transition system \cite{FinkelS01} constructed as follows: a class memory function on a nested data set can be viewed as a labelling of the data set by labels from the set of states.  Since we only care about the shape of the class memory function (i.e. up to automorphisms of the nested data set), we can remove the nodes labelled by $\bot$, and view a class memory function as a finite set of labelled trees.  The set of finite forests of finite trees of bounded depth with the order given by $F \leq F'$ iff there is a forest homomorphism from $F$ to $F'$ (where a forest is the natural extension of tree homomorphisms to forests) is a well-quasi-order \cite{Meyer08}, which provides the basis for the well-structured transition system.  A full proof is given in Appendix \ref{appendix-ndcma-wsts}.

Undecidability for NDCMA and Ackermann-hardness for Weak NDCMA follow from the ideas in Example \ref{ex:ndcma-pnwr}: the reachability (resp. coverability) problem for Petri nets with reset arcs is encodable in NDCMA (resp. Weak NDCMA), and this is undecidable~\cite{ArakiK76} (resp. Ackermann-hard \cite{Schnoebelen02}).
\end{proof}
Weak Nested Data CMA have similar closure properties to Weak CMA.
\begin{proposition}
\begin{enumerate}[(i)]
\item Weak NDCMA are closed under intersection and union.
\item Deterministic Weak NDCMA are closed under all Boolean operations.
\end{enumerate}
\end{proposition}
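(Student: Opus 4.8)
The plan is to mimic, in the nested-data setting, the constructions that give the corresponding closure properties for (deterministic) Weak CMA, checking that nothing goes wrong when transitions act on whole ancestor-chains rather than single data values. The key observation throughout is that for a \emph{fixed} nested data set of level $l$ the class memory function $f$ is a single global object, so a product automaton can maintain a pair of states and still update $f$ consistently for the two component automata simultaneously.

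\medskip

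\noindent
\emph{Proof sketch.}
For part (i), let $\calA = \anglebra{Q,\Sigma,\delta,q_0,F}$ and $\calA' = \anglebra{Q',\Sigma,\delta',q_0',F'}$ be Weak NDCMA of level $l$ over the same nested data set. First I would form the synchronous product $\calA \times \calA'$ with state set $Q \times Q'$, initial state $(q_0,q_0')$, and transitions in $\delta_i \times \delta_i'$: on reading $\twovec{a}{d}$ with $d$ of level $i$, the product reads the $i$-tuple of pairs $((f_1(p),f_1'(p)))$ along the ancestor chain $p = \pred^{i-1}(d),\dots,d$, and moves to $(r,r')$ if $r \in \delta_i(q,a,(i,f_1(\pred^{i-1}(d)),\dots))$ and $r' \in \delta_i'(q',a,(i,f_1'(\pred^{i-1}(d)),\dots))$, writing $(r,r')$ onto $d$ and all its ancestors. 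A routine induction on run length shows the class memory function of the product projects onto those of $\calA$ and $\calA'$ (crucially, because both component automata rewrite exactly the same chain of ancestors of $d$, the bookkeeping stays synchronised), so $L(\calA \times \calA') = L(\calA) \cap L(\calA')$ by taking $F_G = F \times F'$, and $=L(\calA) \cup L(\calA')$ by taking $F_G = (F \times Q') \cup (Q \times F')$. Weakness is preserved since $F_L$ is the whole state set in each case.

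\medskip

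For part (ii), closure under intersection and union is immediate from (i) since the product of two deterministic automata is deterministic. For complementation, the point is that a \emph{deterministic} Weak NDCMA never gets stuck: for each $q,a$ and each level-$i$ tuple in $(Q_\bot)^i$ there is exactly one transition, so on every data word it has a unique, total run. Hence, exactly as for DFA, complementing the global accepting set, $F_G^c = Q \setminus F_G$, while keeping $F_L = Q$ (so weakness and the $f(d)\in F_L\cup\set\bot$ side-condition are vacuous), yields an automaton accepting the complement language, and De Morgan gives the remaining Boolean operations.

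\medskip

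The only genuine subtlety — and the step I expect to need the most care — is the totality claim used for complementation: one must check that the definition of determinism (every set in the image of $\delta$ a singleton) really does force the run to be defined on \emph{all} inputs, including when $d$ or some ancestor $\pred^{j}(d)$ is fresh (value $\bot$ in the tuple). This is fine because $\delta_i$ is required to be a \emph{function} on all of $Q \times \Sigma \times (\set i \times (Q_\bot)^i)$, so every ancestor-pattern, fresh or not, has an image; hence the configuration graph is deterministic and total, and the DFA-style argument applies verbatim. The product construction for (i) has no such issue since non-emptiness of the image is not needed there.
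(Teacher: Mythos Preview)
Your proposal is correct and is exactly what the paper has in mind: its own proof is the single line ``Again these can be shown by the same techniques as for DFA,'' and you have spelled out precisely those techniques (product for intersection and union, complementing the globally accepting set for deterministic complementation), together with the two checks that actually matter in the nested setting --- that both components rewrite the same ancestor chain so the paired class memory function stays coherent, and that determinism makes the run total so the DFA-style complement goes through. One small point to tidy for part~(i): the product with accepting set $(F\times Q')\cup(Q\times F')$ only gives the union once each component has first been completed with a non-accepting sink state, since otherwise one component returning $\emptyset$ on some input blocks runs the other would accept; this is routine (and the paper elides it too), but it is worth stating.
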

\begin{proof}
Again these can be shown by the same techniques as for DFA.
\end{proof}

\begin{corollary}
The containment and equivalence problems for
Deterministic Weak NDCMA are decidable.
\end{corollary}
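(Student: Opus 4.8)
The plan is to reduce both problems to the emptiness problem for Weak NDCMA --- decidable by Theorem~\ref{thm:ndcma-emptiness} --- exploiting the fact, established in the preceding Proposition, that Deterministic Weak NDCMA are effectively closed under all Boolean operations.

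Given Deterministic Weak NDCMA $\calA_1$ and $\calA_2$ recognising $L_1$ and $L_2$, containment and equivalence are interdefinable with emptiness of Boolean combinations: $L_1 \subseteq L_2$ holds iff $L_1 \cap \overline{L_2} = \emptyset$, and $L_1 = L_2$ holds iff both $L_1 \cap \overline{L_2}$ and $L_2 \cap \overline{L_1}$ are empty (equivalently, iff the symmetric difference $(L_1\cap\overline{L_2})\cup(L_2\cap\overline{L_1})$ is empty). So it suffices to (a) construct a Deterministic Weak NDCMA for the relevant Boolean combination, and (b) decide its emptiness. Part (b) is just Theorem~\ref{thm:ndcma-emptiness} applied to the resulting automaton, which is weak by construction. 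Part (a) is the Proposition; concretely, complementation is the DFA-style construction --- complete the transition map of the deterministic automaton by routing every missing transition to a fresh sink state, then replace $F_G$ by $Q\setminus F_G$; determinism and weakness ($F_L = Q$) are preserved, and the vacuity of the local-acceptance condition for weak automata means no extra bookkeeping is needed. Intersection is the usual componentwise product on states.

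The only mild technicality is that $\calA_1$ and $\calA_2$ may be declared over nested datasets of different levels $l_1 \le l_2$; one first re-presents $\calA_1$ as a level-$l_2$ automaton --- adding, via the completion step, transitions $\delta_i$ for $l_1 < i \le l_2$ that send every configuration to the sink --- so that all the Boolean operations above are taken over a common dataset. Beyond this I do not expect any genuine obstacle: the corollary is a routine consequence of the two preceding results, and all the constructions involved are effective. It is worth noting, however, that since the procedure ultimately invokes the emptiness test for Weak NDCMA, which is Ackermann-hard by Theorem~\ref{thm:ndcma-emptiness}, the containment and equivalence problems inherit non-elementary lower bounds; this is why we claim only decidability here, in contrast to the \expspace-completeness available for Deterministic Weak CMA.
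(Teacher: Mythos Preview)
Your proposal is correct and follows exactly the approach the paper intends: the corollary is stated without proof, as an immediate consequence of closure of Deterministic Weak NDCMA under Boolean operations together with decidability of emptiness for Weak NDCMA (Theorem~\ref{thm:ndcma-emptiness}). Your additional remarks on completing the automaton, aligning levels, and the inherited Ackermann-hardness are sound elaborations that the paper leaves implicit.
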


\subsection{Link with Nested Data Automata}
In \cite{Decker14} Decker et al. also examined ``Nested Data Automata'' (NDA), and showed the locally prefix-closed NDA (pNDA) to have decidable emptiness (via reduction to well-structured transition systems).  In fact, these NDA precisely correspond to NDCMA, and again being locally prefix-closed corresponds to weakness. In this section we briefly outline this connection.

\textbf{Nested Data Automata.}  A $k$-nested data automaton is a tuple $(\calA, \calB_1, \calB_2, \dots, \calB_k)$ where $(\calA, \calB_i)$ is a data automaton for each $i$.  Such automata run on words over the alphabet $\Sigma \times \calD^k$, where $\calD$ is a (normal, unstructured) dataset.  As for normal data automata, the transducer, $\calA$, runs on the string projection of the word, giving output $w$.  Then for each $i$ the class automaton $\calB_i$ runs on each subsequence of $w$ corresponding to the positions which agree on the first $i$ data values.   The NDA is locally prefix-closed if each $(\calA, \calB_i)$ is.

Since these NDA are defined on a slightly different presentation of nested data, we provide the following presentation of NDCMA over multiple levels of data.
\begin{definition}
A \emph{Nested Data CMA of level $k$} over the alphabet $\Sigma \times \calD^k$ is a tuple $\anglebra{Q,\Sigma,\delta,q_0,F_L,F_G}$ where $Q$ is a finite set of states, $q_0 \in Q$, $F_G \subseteq F_L \subseteq Q$, and $\delta : Q \times \Sigma \times (Q_\bot)^{k} \rightarrow \powerset (Q)$ is the transition map.

A configuration is a tuple $(q,f_1,f_2,\dots, f_k)$, where each $f_i : \calD^i \rightarrow Q_\bot$ maps an $i$-tuple of data values to a state in the automaton (or $\bot$).  The initial configuration is $(q_0, f_1^0, \dots , f_k^0)$ where $f_i^0$ maps every tuple in the domain to $\bot$.  A configuration $(q,f_1, \dots ,f_k)$ is final if each $f_i$ maps into $F_L \cup \set{\bot}$.  The automaton can transition from configuration $(q,f_1, \dots, f_k)$ to configuration $(q', f_1', \dots, f_k')$ on reading input $(a, d_1, \dots, d_k)$ just if $q' \in \delta(q,a,(f_1(d_1), f_2(d_1,d_2), \dots , f_k (d_1, \dots, d_k)))$, and each $f_i' = f_i[(d_1, \dots, d_i) \mapsto q']$.  
\end{definition}
Using ideas from the proof of equivalence between CMA and DA in~\cite{bjorklund10}, we can show the following result:
\begin{theorem}\label{thm:NDCMA-pNDA}
NDCMA (resp. weak NDCMA) and NDA (resp. pNDA) are expressively equivalent, with effective translations.
\end{theorem}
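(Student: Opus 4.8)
The plan is to lift, level by level, the equivalence between \textsc{CMA} and \textsc{DA} of~\cite{bjorklund10}, and to match the local-acceptance condition of an NDCMA with the prefix-closedness of the class automata of an NDA. Two translations are needed, one in each direction.

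For NDCMA $\to$ NDA, given a level-$k$ NDCMA $\langle Q,\Sigma,\delta,q_0,F_L,F_G\rangle$ I would let the base transducer $\mathcal{A}$ read the $\Sigma$-projection and guess a run of the NDCMA, emitting at each position the tuple (current state, input letter, the $k$-tuple of memory values read there, successor state), and checking locally that this is a legal $\delta$-transition, that consecutive tuples chain, that the initial pre-state is $q_0$, and that the final post-state is in $F_G$. The class automaton $\mathcal{B}_i$, reading the subsequence of emitted tuples at positions sharing a level-$i$ prefix, would check that the recorded level-$i$ memory value is $\bot$ at the first such position and equals the previous post-state afterwards — so that it faithfully tracks $f_i$ on that class — and would be accepting exactly when the most recently recorded post-state lies in $F_L$. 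If $F_L=Q$ then every state of every $\mathcal{B}_i$ is accepting, i.e.\ the NDA is locally prefix-closed; this gives the weak case.

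For NDA $\to$ NDCMA, given $(\mathcal{A},\mathcal{B}_1,\dots,\mathcal{B}_k)$ I would keep in the NDCMA state the current $\mathcal{A}$-state together with a $\mathcal{B}_i$-state for each $i$, and keep in memory cell $f_i$, for a level-$i$ prefix, the last state written while processing that class, recovering the $\mathcal{B}_i$-component from it ($\mathcal{B}_i$'s initial state being used when the cell is $\bot$). Each move advances $\mathcal{A}$ to produce an output letter $b$, advances every $\mathcal{B}_i$ on $b$ from its recovered state, and writes the new state to all relevant cells; a state is globally accepting iff its $\mathcal{A}$-component is $\mathcal{A}$-accepting. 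When the NDA is a pNDA every class run is accepted automatically, so a weak NDCMA suffices and one obtains weak NDCMA $=$ pNDA immediately. In the general case one must also ensure the value left in cell $f_i$ for a level-$i$ class witnesses acceptance of $\mathcal{B}_i$ at the \emph{end} of that class; the idea is to have each position guess the least $\ell$ for which it is the last position of its level-$\ell$ class, record $\ell$ in the state it writes, place into $F_L$ exactly the states whose $\mathcal{B}_j$-component is accepting for all $j\ge\ell$, and force consistency of the guesses by letting any later position that revisits a level-$j$ prefix reject upon finding a recorded level $\le j$ there.

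I expect this last point to be the main obstacle: a multi-level NDCMA has a single, level-agnostic set $F_L$, whereas the correct requirement is that a value surviving in a cell must witness $\mathcal{B}_i$-acceptance for the set of levels at which that position happens to be terminal, which depends on the future of the run — so care is needed to make sure the guessed terminal level is neither too small (caught by the revisiting check above) nor too large. The weak case avoids this altogether, which is why the ``resp.'' formulation is the natural one. Everything else — the level-indexed bookkeeping and the two essentially product-style constructions — I would treat as routine, exactly as in the \textsc{CMA}/\textsc{DA} argument of~\cite{bjorklund10}.
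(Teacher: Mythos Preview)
Your proposal is exactly the approach the paper takes: the paper's own ``proof'' is a single sentence referring the reader to the CMA/DA equivalence of~\cite{bjorklund10} and asserting that it lifts level by level, and you have written out both directions with the right level-indexed bookkeeping. Your NDCMA$\to$NDA construction (transducer guesses the run, $\mathcal{B}_i$ checks the level-$i$ memory thread and is final on $F_L$) is the expected one, and the weak/pNDA correspondence drops out immediately as you say.

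The one point where you go beyond the paper is the obstacle you isolate in the strong NDA$\to$NDCMA direction: a single, level-agnostic $F_L$ must certify that the $\mathcal{B}_j$-component is accepting precisely for those $j$ at which the position is terminal in its level-$j$ class, and your guess-$\ell$ scheme with the revisiting check catches ``$\ell$ too small'' but not ``$\ell$ too large''. This is a genuine difficulty in your sketch, and you are right to flag it; the paper itself offers no construction here beyond the reference, so you have not diverged from the paper but rather made explicit a subtlety it leaves implicit. One natural way to close the gap is to use the per-level-write variant of NDCMA that the paper introduces (as ``syntactic sugar'') in the appendix: writing at level $j$ a value tagged with $j$ lets $F_L$ test the $\mathcal{B}_j$-component against the correct level, and a surviving ``not-last'' tag at level $j$ then witnesses an over-large guess and forces rejection.
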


\subsection{Link with Higher-Order Multicounter Automata}

In \cite{BjorklundB07} the authors examined a link between nested data values and shuffle expressions.  In doing so, they introduced higher-order multicounter automata (HOMCA).  While not explicitly over nested data values, they are closely related to the ideas involved, and in fact we show that, just as multicounter automata and CMA are equivalent, there is a natural translation between HOMCA and the NDCMA we have introduced.  Further, just as the equivalence between MCA and CMA descends to one between weak multicounter automata and weak CMA, we find an equivalence between weak NDCMA and ``weak'' HOMCA in which the corresponding acceptance condition -- hereditary emptiness -- is dropped.  
To show this, we introduce HOMCA', which add restrictions to the $store_i$ and $new_i$ counter operations analogous to the restriction for the $load_i$ operation.
\begin{wrapfigure}{r}{0.7\textwidth}
   \vspace{-30pt}
\begin{displaymath}
\xymatrix{
\text{HOMCA} \ar@<.5ex>[r] & \text{HOMCA'} \ar@<.5ex>[r] \ar@<.5ex>[l] & \text{NDCMA} \ar@<.5ex>[l]  \\
\ar[u] \ar@<.5ex>[r] \text{weak HOMCA} &  \ar[u] \ar@<.5ex>[r] \ar@<.5ex>[l] \text{weak HOMCA'}  & \text{weak NDCMA} \ar[u] \ar@<.5ex>[l]
}
\end{displaymath}
   \vspace{-10pt}
\caption{A diagram showing translations between HOMCA, HOMCA', NDCMA, and their weak counterparts.}
   \vspace{-10pt}
\label{fig:homca-equivalences}
\end{wrapfigure}
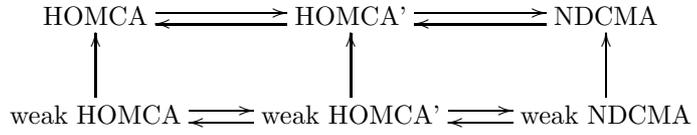
We show that these HOMCA' are equivalent to HOMCA, and that HOMCA' are equivalent to NDCMA, with both of these equivalences descending to the weak versions.  These equivalences are summarised in Figure~\ref{fig:homca-equivalences}.

\begin{definition}
We define \emph{weak} HOMCA to be just as HOMCA, but without the hereditary-emptiness condition on acceptance,  i.e. a run is accepting just if it ends in a final state.
\end{definition}

\begin{definition}
We define HOMCA' to be the same as HOMCA, except for the following changes to the counter operations:
\begin{inparaenum}[(i)]
\item $store_i$ operations are only enabled when $m_1 = m_2 = \dots = m_{i-1} = \bot$; and
\item $new_i$ operations are only enabled when $m_k \neq \bot, m_{k-1} \neq \bot, \dots, m_{i+1} \neq \bot$ and $m_{i-1} = m_{i-2} = \dots = m_1 = \bot$.
\end{inparaenum}

As for HOMCA, we define \emph{weak} HOMCA' to be HOMCA' without the hereditary-emptiness condition. 
\end{definition}
This means that each reachable configuration $(q, m_1, \dots , m_k)$ is such that there is a unique $0 \leq i \leq k$ such that for all $j \leq i$, $m_{j} = \bot$ and each $l > i$, $m_{l} \neq \bot$.

\begin{theorem}
HOMCA (resp. weak HOMCA) and HOMCA' (resp. weak HOMCA') are expressively equivalent, with effective translations between them.
\end{theorem}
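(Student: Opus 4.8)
The plan is to establish the two translations separately; neither of them touches the accepting condition in an essential way — the simulating automaton re‑imposes verbatim the final‑state condition and the hereditary‑emptiness condition of the automaton it simulates — so both restrict immediately to the weak variants, giving the ``weak'' half of the statement for free. The easy direction is HOMCA' $\to$ HOMCA. Operationally a HOMCA' is a HOMCA whose $store_i$ and $new_i$ transitions carry extra guards, and whose reachable configurations consequently always have the staircase shape noted after the definition: a unique threshold $t \in \set{0,\dots,k}$ with $m_1 = \dots = m_t = \bot$ and $m_{t+1},\dots,m_k \neq \bot$. Given a HOMCA' $M$ I would build a HOMCA $M'$ whose state set is $Q \times \set{0,\dots,k}$, running $M$ while keeping the current threshold in the extra component; a transition of $M$ is enabled in $M'$ only when the corresponding HOMCA'-guard, now read off from the tracked threshold, holds, and the component is updated by the operation that fires ($new_i$ resets it to $i-1$, $store_i$ and $load_i$ move it by one, counter operations leave it alone). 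An easy induction shows the tracked threshold is always correct, so $M'$ has exactly the runs of $M$ and hence $L(M') = L(M)$.

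The substantive direction is HOMCA $\to$ HOMCA'. Here the difficulty is intrinsic: a HOMCA may have gaps among its active levels and may fire $store_i$ or $new_i$ while levels below $i$ are non‑$\bot$, so its configurations, and its moves, are of a kind a HOMCA' cannot reach at all. The plan is to simulate a level‑$k$ HOMCA $M$ by a HOMCA' $M'$ that uses $k$ ``working'' levels together with a fixed number of additional ``scratch'' levels. $M'$ records in its finite control the shape of $M$'s configuration (which of $m_1,\dots,m_k$ are $\bot$) and lays out the non‑$\bot$ multisets of $M$ contiguously inside its store, occupying a suffix of its levels, filling the gaps with hereditarily‑empty dummy multisets and keeping the remaining levels empty as scratch. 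Counter operations are copied directly. A $new_i$ of $M$ is simulated by pushing the required empty multisets from the top down — which is exactly the order HOMCA' permits. A $store_i$ or $load_i$ that $M$ performs at a position that is not the top of $M'$'s active suffix is simulated by first moving the active multisets lying below level $i$ out into the scratch region (by a sequence of $new$/$store$ operations that wrap each of them into higher‑level singletons), then performing the operation at what is now the top, then moving the stashed multisets back down.

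The hard part will be verifying that this relocation never corrupts a multiset, which is delicate because a HOMCA' $store$ is irreversible and its $load$ is nondeterministic, so scratch material must never be picked up by a $load$ intended for ``real'' material. I would handle this by adding a fresh marker symbol to the multiset alphabet, tagging every dummy and scratch multiset with it, tracking in the finite control which levels currently hold real rather than scratch content, and guarding each simulated $load$ so that it only draws from a level known to hold a real element — repeating the $load$ (and re‑inserting what it returns) until a marker test confirms a real element has been obtained. Because a HOMCA accepts by hereditary emptiness and markers make scratch multisets not hereditarily empty, an accepting simulation ends with a bounded, nondeterministically‑guessed clean‑up phase that empties the scratch region; as emptying a hereditarily‑empty‑modulo‑markers structure is itself a HOMCA' computation, this is sound, and once it is carried out the argument is a routine bisimulation between $M$ and $M'$. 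In the weak case there is no accepting condition to re‑establish, the markers are unnecessary, and the same two translations (with the hereditary‑emptiness clause deleted throughout) directly witness equivalence of weak HOMCA and weak HOMCA'.
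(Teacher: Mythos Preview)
Your easy direction (HOMCA' $\to$ HOMCA) is right and matches the paper exactly: track the threshold $t$ in the finite control and gate $store_i$, $new_i$ accordingly.

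The hard direction has a genuine gap. Your plan hinges on \emph{relocating} the multisets $m_1,\dots,m_{i-1}$ into a ``scratch region'' of extra levels and bringing them back afterwards, but the HOMCA' operation set does not support any such relocation. The only way to remove $m_j$ from level $j$ is $store_j$, which necessarily merges it into $m_{j+1}$; there is no primitive that moves a multiset to a non-adjacent level, and wrapping $m_j$ ``into a higher-level singleton'' would require a fresh empty $m_{j+1}$, which you do not have. Extra levels above or below the working range do not change this: content only travels one level at a time, through whatever already sits at that level. So the sentence ``moving the active multisets lying below level $i$ out into the scratch region (by a sequence of $new$/$store$ operations that wrap each of them into higher-level singletons)'' does not describe an executable HOMCA' routine.

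The paper's construction is essentially different and worth comparing. It does \emph{not} add levels. Instead, to simulate a $store_2$ with $m_1\neq\bot$, it creates a fresh ``ghost'' level-$2$ multiset (with a marked empty level-$1$ child), and then copies $m_1$ into this child \emph{one alphabet letter at a time}: repeatedly $dec_\gamma$ from the old $m_1$, fold everything up with $store$'s, navigate (using markers $(\mathit{mf},\cdot)/(\mathit{mt},\cdot)$) back down to the new $m_1$, and $inc_\gamma$ there. When the loop stops, the old $m_1$ is retagged $(\mathit{inactive},1)$ and abandoned in place; the actual $store_2$ is then performed legally. Crucially, correctness is not enforced during the run at all --- it is the hereditary-emptiness acceptance condition that certifies \emph{post hoc} that nothing was left uncopied in the abandoned $m_1$. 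This is the key idea your proposal is missing.

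For the weak case your claim that ``markers are unnecessary'' is not argued and is not quite the point. The paper keeps the same construction; without hereditary emptiness the copy may be lossy, but this is harmless because if a lossy run of the simulator accepts then so does the corresponding non-lossy run (increments can only help, never block). You would need that monotonicity argument explicitly; simply deleting the acceptance clause does not by itself show the translation remains sound.
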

\begin{proof}
This requires simulating the HOMCA operations $store_i$ and $new_i$ in HOMCA': which can be difficult if, for instance, the HOMCA is carrying out a $store_i$ operation when it has a current level-$(i-1)$ multiset in memory.  The trick is to move the level-$(i-1)$ multiset across to be nested under a new level-$i$ multiset, and this can be done one element at a time in a ``folding-and-unfolding'' method.  The hereditary emptiness condition checks that the each of these movements was completed, i.e. no element was left unmoved. In the weak case some elements not being moved could not change an accepting run to a non-accepting run, so the fallibility of the moving method does not matter.   We provide a more detailed proof sketch in Appendix~\ref{appendix-homca}.
\end{proof}

We now have the main result of this section:
\begin{theorem}\label{thm:ndcma-homca-equivalence}
For every (weak) level-$k$ NDCMA, $\calA$, there is a (weak) level-$k$ HOMCA', $\calA'$, such that $\lang{\calA'} = str(\lang{\calA})$, and vice-versa. 
\end{theorem}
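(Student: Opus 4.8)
The plan is to reduce the statement, via the equivalence of HOMCA and HOMCA' (and of their weak variants) proved above, to exhibiting effective, weakness-preserving translations between level-$k$ NDCMA and level-$k$ HOMCA' that match the respective string languages up to $str(\cdot)$. Morally this is the nested-data lift of the classical equivalence between multicounter automata and class memory automata \cite{BojanczykMSSD06,bjorklund10}, where a counter records how many data values were last seen in a given state; here a level-$k$ nested multiset records the \emph{shape} of a nested class memory function. The dictionary I would use is: a class memory function of an NDCMA, taken up to automorphism of the nested dataset and with its $\bot$-nodes removed, is a finite $Q$-labelled forest of trees of depth at most $k$; I would encode such a forest as a level-$k$ multiset over $Q$ enriched with finitely many tags, a tree node at depth $j$ being carried by a sub-multiset at the corresponding nesting depth, with a tag reserved to record the node's own state alongside the encodings of its children. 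Under this correspondence the defining restrictions of HOMCA' -- which force the ``staircase'' invariant that there is always a unique smallest index $i$ with $m_i \neq \bot$ -- say exactly that to touch a level-$i$ data value one must have descended the tree through levels $1,\dots,i$, and the operations $load_i$ and $store_i$ become single downward and upward steps of such a traversal. Throughout I would allow $\epsilon$-transitions on the HOMCA' side (a standard convention that does not change expressiveness), and I would handle the string projection by letting the HOMCA' guess, one input letter at a time, the data value -- its level, and whether it is fresh, or which existing one -- attached to that letter.

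For the direction from NDCMA to HOMCA', I would simulate one step of the NDCMA reading a level-$i$ value $d$ by a small gadget of HOMCA' transitions whose first transition consumes the input letter and whose remaining transitions consume $\epsilon$: the gadget uses $load$ operations to navigate down to (or, via $new$ and $inc$, to freshly create) the sub-tree rooted at $\pred^{i-1}(d)$ containing $d$, checking on the way down that the labels it uncovers are exactly $f(\pred^{i-1}(d)),\dots,f(d)$ as demanded by the relevant clause of $\delta_i$, then rewrites those labels according to the transition and uses $store$ to reassemble the structure. Intermediate gadget states would never be accepting, so partial gadgets cannot contribute runs. In the weak case I would simply take the accepting states of the HOMCA' to be $F_G$ and stop, since a weak HOMCA' needs only to reach an accepting state. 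In the general case I would append, before any accepting state, a clean-up phase that systematically dismantles the whole nested multiset, verifying as it deletes each label-atom that it lies in $F_L$; the final configuration is hereditarily empty precisely when every node of $f$ carried a label in $F_L$, which is exactly the local-acceptance condition of the NDCMA.

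For the converse, given a level-$k$ HOMCA', I would let the NDCMA keep an encoding of the current nested multiset in its class memory function -- atoms becoming leaves, sub-multisets becoming internal nodes -- together with a distinguished marked root-to-node path recording the HOMCA' cursor, which is well defined by the staircase invariant; then $load_i$ and $store_i$ become extension and retraction of this path by one level (each carried out by reading a suitable data value), $inc_a$ becomes reading a fresh deepest-level value whose label encodes $a$, and $dec_a$ becomes moving such a value to a ``dead'' state that is never read again. Fullness of the nested dataset (infinite branching at every level) guarantees that fresh values are always available, so every HOMCA' run can be realised over some well-structured data word, while the NDCMA rejects (falls into a non-accepting sink) on any input that is not such an encoding; hence the $str$-projection of the language of the NDCMA equals that of the HOMCA'. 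I would declare every state locally accepting except the ``atom present'' labels, so that ``all nodes locally accepting'' coincides with ``no atoms remain'', i.e. hereditary emptiness -- so weak NDCMA correspond to weak HOMCA' and the general ones to general ones.

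The main obstacle is the faithfulness of these two encodings under the HOMCA' restrictions, and three points in particular need care. First, recording the $Q$-labels of \emph{internal} tree nodes inside an otherwise unlabelled nested multiset without exceeding nesting depth $k$ -- this is what forces the tag bookkeeping in the dictionary above. Second, fresh data values, and especially a fresh level-$i$ value all of whose ancestors are also fresh, which must be matched against creating a whole chain of new empty multisets using only the restricted $new_i$ operation while keeping the staircase invariant intact. Third, the fact that a single NDCMA step atomically rewrites an entire root-to-node path, whereas any HOMCA' simulation is forced to take that path apart and rebuild it one level at a time; one must therefore argue that incomplete simulations can never leak accepting behaviour -- and in the non-weak direction this is exactly what the hereditary-emptiness clean-up phase (respectively the ``dead''/``atom present'' labelling) is there to police. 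Everything else is routine product construction and bookkeeping.
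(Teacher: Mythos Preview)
Your proposal is correct and follows essentially the same approach as the paper: the paper encodes level-$j$ data values as level-$(k-j+1)$ multisets tagged by $(s,j)\in Q\times\{0,\dots,k\}$ to record the state, simulates each NDCMA transition by a $load/\mathit{check}/\mathit{rewrite}/store$ gadget over intermediate states, and appends a clean-up state $q_F$ that can only decrement $(s,i)$ with $s\in F_L$; conversely it marks the active root-to-cursor path by a state $\bullet$ (inactive siblings by $\circ$), uses the level-$0$ value to track the current staircase index, sends decremented atoms to $q_{dead}$, and takes $F_L$ to be everything except the atom-label states $A$. The three obstacles you flag (internal-node tags without extra depth, fresh ancestor chains under the restricted $new_i$, and atomicity of the path rewrite) are precisely the points the paper's construction addresses.
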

\begin{proof}
This proof rests on the strong similarity between the nesting of data values, and the nesting of level-$i$ multisets in level-$(i+1)$ multisets.  

For a NDCMA to simulate a HOMCA', we use level-$k$ data values to represent instances of the multiset letters, level-$(k-1)$ data values to represent level-1 multisets, and so on, up to level-$1$ data values representing level-$(k-1)$ multisets.  Since each run of a HOMCA' can have at most one level-$k$ multiset, this does not need to be encoded in data values.  

Conversely, when simulating a NDCMA with a HOMCA', a level-$k$ data value is represented by an instance of an appropriate multiset letter.  The letter contains the information on which state the data value was last seen in.  Level-$(k-1)$ data values are represented by level-1 multisets, which also include a multiset letter storing the state that data value was last seen in.  We provide a full proof of the theorem in Appendix~\ref{appendix:ndcma-homca}.
\end{proof}


\section{Conclusion}

We showed that by dropping local acceptance conditions in Class Memory Automata one obtains a robust class of languages that has already appeared in the literature
under various guises. Furthermore, its deterministic restriction is closed under all Boolean operations, which implies decidability (in fact, \expspace-completeness)
of the containment and equivalence problems. We recall that both were undecidable for general class memory automata.

We introduce a new notion of nesting for data languages, based on tree-structured datasets. This notion does not commit all letters to be at the same level of nesting
and appears promising from the point of view of modelling scenarios with hierarchical name structure, such as concurrent or higher-order computation. 
We extend Class Memory Automata to operate over these nested datasets, and show that without the local acceptance condition, these have a decidable emptiness problem, and in the deterministic case are closed under all Boolean operations.

In future work, we would like to understand better whether there is a natural fragment of the $\pi$-calculus that corresponds to the new classes of automata.
On the logical side, an interesting outstanding question is to characterize languages accepted by our classes of automata
with suitable logics.

\bibliography{bibliography}

\selectlanguage{UKenglish}

\newpage

\appendix
\section{Proof that WCMA and CCA are equivalent}\label{appendix:wcma-cca}

\begin{proposition}
Given a WCMA a CCA recognising the same language can be constructed in \ptime.
\end{proposition}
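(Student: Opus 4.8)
The plan is to simulate a given WCMA directly by a CCA, using the CCA's bag to play the role of the class memory function. This works precisely because the WCMA is weak: a CCA's acceptance condition is purely on the final control state, with no way to inspect the bag globally at the end, so there is no local-acceptance condition to reproduce. I would encode each state of the WCMA as a natural number, test the relevant bag entry against the code of the state in which the data value was last seen, and then overwrite that entry with the code of the new state.

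Concretely, fix a WCMA $\calA = \anglebra{Q,\Sigma,\delta,q_0,F}$ and enumerate $Q = \set{p_0,\dots,p_{m-1}}$. Define an injective encoding $\mathrm{code} : Q_\bot \to \natnum$ by $\mathrm{code}(p_i) = i+1$ and $\mathrm{code}(\bot) = 0$, so that $0$ is reserved exclusively for ``fresh''. Build the CCA $\calC = \anglebra{Q,\Sigma,\Delta,q_0,F}$ with the same control states, initial state and accepting states, and with
\[
\Delta = \set{\, (q,\,a,\,(=,\mathrm{code}(s)),\,\downarrow,\,\mathrm{code}(q'),\,q') \;:\; s \in Q_\bot,\ q' \in \delta(q,a,s) \,}.
\]
Here, on reading $\twovec{a}{d}$, the CCA inspects $h(d)$ — which will hold the code of the state in which $d$ was last seen — checks via the constraint $(=,\mathrm{code}(s))$ that this corresponds to an $s$ for which $\delta(q,a,s)$ contains $q'$, and then uses the reset operation $\downarrow$ with value $\mathrm{code}(q')$ to set $h(d) := \mathrm{code}(q')$, leaving all other bag entries untouched.

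To verify correctness I would establish, by induction on the length of the run, the invariant that on a given data word the CCA can reach configuration $(q,h)$ exactly when $\calA$ can reach $(q,f)$ with $h(d) = \mathrm{code}(f(d))$ for every $d \in \dataset$. The base case holds since $h_0$ is the zero function and $f_0$ maps everything to $\bot$; the step holds because the transition of $\Delta$ fires iff $f(d) = s$ for some $s$ with $q' \in \delta(q,a,s)$, and the update $h[d \mapsto \mathrm{code}(q')]$ mirrors $f' = f[d \mapsto q']$. Since $\calA$ is weak, a run is accepting iff it ends in a state of $F$, which is exactly the acceptance condition of $\calC$, so $\lang{\calC} = \lang{\calA}$. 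The construction yields $|\Delta| \le |\delta|$ transitions and is clearly computable in linear time, hence in \ptime. I do not expect any real obstacle in this direction — the only point needing care is keeping the codes of $\bot$ and of the genuine states disjoint; the genuinely harder direction is the converse simulation of a CCA by a WCMA, which relies on the bound $N$ above which counter values become indistinguishable and is handled separately.
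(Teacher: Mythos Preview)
Your proposal is correct and matches the paper's own proof essentially line for line: both keep the same control states and accepting set, encode each state (and $\bot$) by a distinct natural number with $0$ reserved for freshness, test the bag via an equality constraint, and use the reset operation $\downarrow$ to overwrite the entry with the code of the target state. The paper likewise notes that correctness is a straightforward induction.
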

\begin{proof}
Let $\calA = \anglebra{Q, \Sigma, \Delta, q_0, F}$ be a WCMA.  WLOG suppose $Q = \set{q_1, \dots , q_n}$.  Consider the CCA $\calA' = \anglebra{Q, \Sigma, \Delta', q_0, F}$ where $\Delta'$ is given by:
\begin{align*}
\Delta' = &\set{ ( q_i,a,(=,k),\downarrow,j,q_j ) \:\: | \:\:  k \geq 1 \text{ and } (q_i, a, q_k, q_j) \in \Delta} \\ &\cup \set{ ( q_i,a,(=,0),\downarrow,j,q_j ) \:\: | \:\: (q_i, a, \bot, q_j) \in \Delta }
\end{align*}
The bag is used to simulate the class memory function by representing a state by a natural number.  That these automata recognise the same language is a straightforward induction.
\end{proof}

\begin{proposition}
Given a CCA a WCMA recognising the same language can be constructed in \ptime.
\end{proposition}
\begin{proof}
The idea of this construction is to use the class memory function to mimic the counters, which can be done by representing the counter value in the states.  This is possible since only a finite number of counter values can behave differently: if the greatest number used in the constraints of the transition function of a CCA is $n_0$ then all the integers above $n_0$ are equivalent for the purposes of which transitions apply.  Further, since the counters can only be incremented or reset, there is no need to track how much greater than $n_0$ such a counter is.

Let $\calA = \anglebra{Q, \Sigma, \Delta, q_0, F}$ be a CCA.  Since $\Delta$ is finite, there is a greatest integer used in $\Delta$.  Let $n_0$ be this greatest occurring integer.  Define $N = \set{0, 1, \dots, n_0, n_0+1}$.

We define the WCMA $\calA' = \anglebra{Q \times N, \Sigma, \Delta', (q_0, 0), F \times N}$ where $\Delta' = \bigcup_{\delta \in \Delta} S_\delta$
where, for $\delta = (q,a,c,\uparrow^+,m,q')$: 
\begin{align*}
S_\delta = &\set{((q,i), a, \bot, (q',m)) \: | \: i \in N, 0 \vDash c} \\
&\cup \set{((q,i), a, (q'',l), (q',j)) \: | \: i,l \in N, l \vDash c, j = min(l+m, n_0+1)}
\end{align*}
and for $\delta = (q,a,c,\downarrow,m,q')$: 
\begin{align*}
S_\delta = &\set{((q,i), a, \bot, (q',m)) \: | \: i \in N, 0 \vDash c} \\
&\cup \set{((q,i), a, (q'',l), (q',m)) \: | \: i,l \in N, l \vDash c}
\end{align*}
Again, showing that these automata recognise the same language is a straightforward induction.
Note that this construction is in \ptime\ since it is assumed that integers in CCAs are given in unary.
\end{proof}

It follows immediately from the corresponding result for CCA that the emptiness problem for WCMA is \expspace-complete.  We note that although here we have provided an equivalence with CCA, this is also straightforward to show by direct reduction to weak multicounter automata: the counters can be used to count how many data values were last seen in each state.\footnote{Indeed, our reasons for calling these types of CMA ``weak" stems from the equivalence between these weak CMA and weak multicounter automata, mirroring the equivalence between (strong) CMA and (strong) multicounter automata}

\section{Proof that NDCMA reduce to a WSTS}\label{appendix-ndcma-wsts}
We assume familiarity with the theory of well-structured transition systems, as described in \cite{FinkelS01}.

For this we will fix a nested data set, $\anglebra{\calD, pred}$ of level $k$, and we make the following observation: a class memory function $f : \calD \rightarrow Q_\bot$ can be interpreted as a labelling of the forest $\calD$ with labels from $Q_\bot$.  To simply the following discussion, we add a distinguished ``level-0'' data value, which is the parent of all of the level-1 data values; this means we are working only over a tree, rather than a forest.  Further, we note that by the definition of the transition function, if $f$ is a class memory function obtained as part of a reachable configuration $(q,f)$, then $f(d) = p \in Q$ implies $f(pred(d)) \neq \bot$ (here we assume that the class memory function has some distinguished symbol for the level-0 data value). Hence, in our labelled tree $\calD_f$, whenever we reach a node labelled $\bot$, all of that node's descendants are also labelled $\bot$.  Thus, each reachable class memory function gives rise to an unordered labelled finite tree, $T_f$ (with labels from $Q$ plus a distinguished symbol for the root) of depth $\leqslant k+1$.  Further, given two class memory functions $f$ and $f'$ such that $T_f = T_{f'}$, it is clear that $f$ is equivalent to $f'$ up to automorphism of $\calD$ (i.e. renaming of data values).  

Given an NDCMA $\calA = \anglebra{Q, \Sigma, \Delta, q_0, F}$ we construct the WSTS $\mathcal{S} = \anglebra{S, \rightarrow, \leq}$ as follows:
\begin{itemize}
\item $S = Q \times \Phi$ where $\Phi$ is the set of unordered labelled finite trees of depth $\leqslant k +1$ (with labels from $Q$ plus a distinguished symbol for the root)
\item $\rightarrow$ is defined as follows: $(q, T) \rightarrow (q', T')$ iff there are Class Memory Functions $f_T$ and $f_{T'}$ s.t. $T_{f_T} = T$ and $T_{f_{T'}} = T'$ and $(q', f_{T'})$ is a valid successor configuration (in $\calA$) of $(q, f_T)$.
\item $\leq$ is defined as follows: $(q,T) \leq (q',T')$ iff $q = q'$ and there is a tree homomorphism from $T$ to $T'$.\footnote{A tree homomorphism from $T_1$ to $T_2$ is a function, $\phi$, mapping each node of $T_1$ to a node of $T_2$ which maps the root node to the root node, and preserves $pred$ and labels. (i.e. $\phi ( pred (n)) = pred ( \phi(n))$ and $lab(n) = lab (\phi(n))$.)}
\end{itemize}

That $\rightarrow$ is finite-branching is obvious.  That $\leq$ is a qo is immediate from composition of homomorphisms.  That it is a well-quasi-order is given by Lemma 7 in \cite{Meyer08}.  The upward-compatibility condition is straightforward: if $(p,R) \leq (q,T)$ and $(p,R) \rightarrow (p',R')$ then $p = q$ and there is some $\delta \in \Delta$ such that applying $\delta$ from configuration $(p,f_R)$ yields configuration $(p, f_{R'})$.  Applying the same $\delta$ from configuration $(q,T)$ will yield an appropriate $(p', T') \geq (p', R')$.

Decidability of $\leq$ is also straightforward: to decide whether $(q, T) \leq (q', T')$ first check $q = q'$, then simply check each possible injection from $T$ to $T'$ for being a homomorphism.

The only remaining property to check is that we have an effective pred-basis.  We do this by showing an effective pred-basis wrt each transition in $\calA$.  The union of these can then be taken.  Formally, we compute $pb(s)$ as $pb(s) = \bigcup_{\delta \in \Delta} S_{s,\delta}$.

We now define these $S_{s,\delta}$.  Assume $s = (q,T)$.  If $\delta$ is a transition which does not go to state $q$, $S_{s,\delta} = \emptyset$.  Otherwise, $\delta = p \xrightarrow{a, (i, \bar{s})} q$ for appropriate $p$, $a$, $i$, and $\bar{s}$. In this case $S_{s, \delta} = \set{ T_{\rho, \bar{s}} \: : \: \rho \in R_{T,  \bar{s}}}$, where 
\begin{align*} R_{T,\bar{s}} = & \text{ the set of downward-paths, } n_0, n_1, \dots n_j, \text{ from the root, such that the labels of} \\
& \text{this path (excluding the root's label) give some non-empty prefix of }  q^k \text{ and} \\
& \text{if } s_k = \bot \text{ then } n_k  \text{has no children other than (possibly) } n_{k+1}
\end{align*} 
$T_{\rho, \bar{s}}$ is then defined as follows:
\begin{itemize}
\item If $\rho$ is a path of $i$ nodes (i.e. the path labels were $q^i$), then $T_{\rho, \bar{s}}$ is constructed by replacing the labels of the nodes in $\rho$ with the corresponding element of $\bar{s}$.  If such an element is $\bot$ then the node is deleted (note that in this case by choice of $\rho$ any child nodes must also be in $\rho$ and labelled with $\bot$, so are also deleted)\footnote{Note that the union of the $T_{\rho, \bar{s}}$ for these paths actually give $Pred((q,T))$.  The upward closure is obtained from the shorter paths in the next case.}
\item If $\rho$ is a path of $1 \leqslant j < i$ nodes, then $T_{\rho, \bar{S}}$ is constructed by replacing the labels of the nodes in $\rho$ with the corresponding element of $\bar{s}$, and adding nodes labelled with the remaining elements of $\bar{s}$ as a branch off the last node in $\rho$.  Any nodes labelled with $\bot$ are then deleted (again, due to the choice of $\rho$ and constraints on transitions this must result in obtaining a tree).
\end{itemize}

\section{Proof that HOMCA are equivalent to HOMCA'}\label{appendix-homca}
We note that HOMCA and HOMCA' clearly have equal expressivity in the level-$1$ and level-$2$ cases.

In general, to simulate a HOMCA' with a HOMCA is straightforward: the only extra work that needs be done is prevent $store_i$ or $new_i$ operations happening when they are not allowed to in a HOMCA'.  This can be done by storing which of the current configuration's multisets are currently undefined in the states.

Simulating a HOMCA with a HOMCA' requires dealing with configurations $(q, m_1, \dots, m_k)$ where for some $i < j < k$ $m_j = \bot$ and $m_i \neq \bot \neq m_k$.  In particular, $m_1, \dots, m_{j-1}$ may then be split off into a new level-$j$ multiset.  
We describe how this can be dealt with in the level-$3$ case, and extensions of this method can be used to deal with higher levels.  The main idea will be, when in configuration $(q, m_1, m_2, m_3)$, $m_1 \neq \bot$ and a $store_2$ operation would occur, to create a new ``ghost" level-$2$ multiset, to copy $m_1$ across to ``under" this multiset.  This is done by the HOMCA' repeatedly performing the following operation: removing one letter from $m_1$, then storing $m_1$ in $m_2$, storing $m_2$ in $m_3$, loading the ``ghost" level-$2$ multiset and then the loading the copy of $m_1$, and adding the letter removed from $m_1$ to the copy of $m_1$.  At the end of this process, the original $m_1$ is marked as ``inactive", and will not be used again.  If anything was left in $m_1$, it will still be there at the end of the run, so the run will not be an accepting one.  When, later, a new level-$2$ multiset would be created, this can be done by simply removing the ``ghost"-marking from the level-$2$ multiset already created.

Formally, we add to the multiset alphabet, $\Gamma$, new letters of the form $(active,i)$, $(inactive, i)$, $(mf,i)$, $(mt,i)$, $(current,i)$, and $(ghost, i)$, where $i \in \set{1, 2, 3}$.  
The $mf$ and $mt$ tags will be used to keep track of which multisets we are ``moving from'' and ``moving to'' in the folding-and-unfolding process described above.
Whenever a level-$i$ multiset is made, it is immediately populated (using a series of $\epsilon$-transitions) to contain the level-$(i-1)$ multiset $\set{(active,i)}^{i-1}$.\footnote{Following notation from \cite{BjorklundB07}, we write $\set{a}^j$ to mean the level-$j$ multiset $\set{\dots \set{a} \dots}$.}  We can then, when a level-$i$ multiset is loaded, check for the ``active" tag by making the counter operations:
$$ load_{i-1} \cdot \dots \cdot load_1 \cdot  dec_{(active,i)} \cdot inc_{(active,i)} \cdot store_1 \cdot \dots \cdot store_{i-1}$$
We abbreviate this sequence of operations to $check_{active,i}$.  
Now, when a $store_2$ operation would happen with a non-$\bot$ level-$1$ multiset open, we make the following sequence of counter operations:
\begin{itemize}
\item First, the level-$1$ multiset is changed from containing $(active,1)$ to containing $(mf,1)$, and stored in the level-$2$ multiset (with operations $dec_{(active,1)} \cdot inc_{(mf,1)} \cdot store_1$)
\item The level-$2$ multiset is also marked with operations $ load_1 \cdot dec_{(active,2)} \cdot inc_{(mf,2)} \cdot store_1$
\item The ``ghost" level-$2$ multiset is then initialised with operations $store_2 \cdot new_2 \cdot new_1 \cdot inc_{(mt,2)} \cdot store_1$
\item The new copy of $m_1$ is initialised with operations $new_1 \cdot inc_{(mt,1)} $
\item Then we return to the original copy of $m_1$ with operations $store_1 \cdot store_2 \cdot load_2 \cdot check_{mf,2} \cdot load_1 \cdot check_{mf,1}$
\item We then non-deterministically run the following operations any number of times (with possibly different $\gamma \in \Gamma$ on each iteration):
\begin{align*} &dec_\gamma \, \cdot \, store_1 \, \cdot \, store_2 \, \cdot \,  load_2 \, \cdot \, check_{mt,2} \, \cdot \, load_1 \, \cdot \, check_{mt,1} \, \, \cdot \\ 
&inc_\gamma \, \cdot \,  store_1 \, \cdot \, store_2 \, \cdot \, load_2 \, \cdot \, check_{mf,2} \, \cdot \, load_1 \, \cdot \, check_{mf,1} 
\end{align*}
\item After this has run non-deterministically many times, we assume that all of $m_1$ has been copied, and change its marking to inactive, with the operations $dec_{(mf,1)} \cdot inc_{(inactive,1)} \cdot store_1$
\item We then return the marking of $m_2$ to normal and store it as required, with the operations $load_1 \cdot dec_{(mf,2)} \cdot inc_{(active,2)} \cdot store_1 \cdot store_2$
\item Finally, we correct the markings of the new multisets with the following operations: $load_2 \cdot load_1 \cdot dec_{(mt,2)} \cdot inc_{(ghost,2)} \cdot store_1 \cdot load_1 \cdot dec_{(mt,1)} \cdot inc_{(active,1)}$
\end{itemize}
When a $new_2$ operation would subsequently happen, the current level-$1$ multiset at the time would have its marking changed from $(active,1)$ to $(current,1)$.  This is then stored, and the level-$2$ multiset has its marking changed from $(ghost,2)$ to $(active,2)$.  Finally, a $load_1$ operation is made, and the loaded multiset has its marking changed from $(current,1)$ to $(active,1)$.  At the end of the run, the automaton is able to remove any $active$, $inactive$, $current$, and $ghost$ markings.

This construction guarantees that every element is moved across correctly, because if not the ignored element will still be present at the end of the run, violating hereditary emptiness.  In the weak case, this cannot be enforced, and this operation may behave ``lossily''.  However, this is not a problem: if it is possible for there to be a ``lossy'' run, then it is possible for there to be a non-lossy run.  If there is a lossy run which is accepting, then the corresponding non-lossy run must also be accepting, since increments of counters cannot prevent acceptance.

We have just given a basic example, but the same ideas can be used for the constructions necessary for dealing with $store_i$ operations with $i > 2$.  In this example we only needed to move across a single level-1 multiset, which we did by iterating over each of the elements in the multiset.  When moving a level-2 multiset, each level-1 multiset must be moved, by iterating over them, using the method described here for each one.

\section{Proof that NDCMA are equivalent to HOMCA}\label{appendix:ndcma-homca}
For this proof we make use of some slight syntactic sugar for NDCMA, which does not change their power or other properties.  Instead of the transition function just giving a single state which the automaton moves to and which class memory function stores for the current data value and all of its ancestors, the transition function can specify the class memory function's newly stored value for the input data value and each of its ancestors individually.  We also add a level-0 data value that is the parent of every level-1 data value (and we require that only data values of level-1 and above can actually be read). Formally, the transition function is now $\delta = \bigcup_{1 \leq i \leq l} \delta_i$ where each $\delta_i$ is a function:
\[
\delta_i : Q \times \Sigma \times (\set{i} \times (Q_\bot)^{i+1}) \rightarrow \powerset (Q \times Q^{i+1})
\]
When the automaton is in configuration $(q,f)$, reads input letter $\twovec{a}{d}$, and takes transition $(q, a, (i,\threevec{s_0}{\vdots}{s_i})) \rightarrow (q', \threevec{t_0}{\vdots}{t_i})$, the automaton moves to configuration $(q', f')$ where $f' = f[pred^i(d) \mapsto t_0, pred^{i-1}(d) \mapsto t_1, \dots, d \mapsto t_i]$.  Notationally, we may write this transition as:
$$ p \xrightarrow{ a, (i, \threevec{s_0}{\vdots}{s_i})} q, \threevec{t_0}{\vdots}{t_i} $$
A level-$k$ NDCMA $\anglebra{Q, \Sigma, \delta, q_0, F_L, F_G}$ with this syntactic sugar can be simulated by a level-$k$ NDCMA with state-set $Q^{k+1}$, by using the $(j+1)^{th}$ component of the state to store the specified location for the level-$j$ ancestor of the read data value.

We prove the result by showing the two simulation separately.
\begin{proposition}
Given a strong (resp. weak) NDCMA $\calA$, a strong (resp. weak) HOMCA' $\calA'$ of the same level can be constructed such that $\mathcal{L}(\calA') = str( \mathcal{L}(\calA)) $
\end{proposition}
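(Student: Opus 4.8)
The plan is to have $\calA'$ maintain, inside its unique top counter $m_k$, an encoding of the current class memory function of $\calA$. By the live-prefix property recorded in Appendix~\ref{appendix-ndcma-wsts} ($f(d)\neq\bot$ implies $f(\pred(d))\neq\bot$ in any reachable configuration), such a class memory function is a finite labelled tree of depth at most $k+1$, rooted at the level-$0$ data value. I would encode it as follows: a level-$j$ data value ($1\le j\le k-1$) last seen in state $p$ is a level-$(k-j)$ multiset whose elements are the encodings of its children together with one distinguished chain of ``marker'' sub-multisets bottoming out in a single level-$0$ letter $(\mathit{st},j,p)$; a level-$k$ data value last seen in $p$ is the bare letter $(\mathit{st},k,p)$; the state of the unique level-$0$ data value, and the current $\calA$-state, are kept in the finite control of $\calA'$. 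Between input letters, $\calA'$ sits in a \emph{rest} configuration, with $m_k\neq\bot$ holding the whole encoding and $m_1=\dots=m_{k-1}=\bot$.

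First I would describe how $\calA'$ simulates the reading of a data-word letter $\twovec{a}{d}$ with $d$ at level $i$; this is a block of transitions of which exactly one consumes the input letter $a$ and the rest are $\epsilon$-moves. From a rest configuration, $\calA'$ performs a sequence of $load$ operations $load_{k-1},load_{k-2},\dots$ that guess and open the nested multisets encoding $\pred^{i-1}(d),\pred^{i-2}(d),\dots$ down to $d$ itself (when $i=k$, $d$ is a letter rather than a multiset); at each opened multiset it descends the marker chain by further $load$'s down to $m_1$, reads off the recorded state via a $dec/inc$ pair, and re-stores the chain, thereby copying $f(\pred^{i-1}(d)),\dots,f(d)$ into the control (a wrong guess inside a marker chain deadlocks, since the marker letters carry their level). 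Wherever $\calA$ would read a fresh $\bot$-value, $\calA'$ instead issues $new_j$ moves to create the corresponding fresh multisets with fresh marker chains. Now the control holds the whole tuple $(f(\pred^i(d)),\dots,f(d))$, so $\calA'$ nondeterministically chooses an applicable transition of $\delta_i$ on $(a,\cdot)$, rewrites the marker letters along the opened path (and the level-$0$ state, and the $\calA$-state) to the new states, and re-stores everything with $store_{k-i},\dots,store_{k-1}$, returning to a rest configuration. The key observation — which makes every $new_j$, $store_j$, $load_j$ used here a legal HOMCA' operation — is that the set of open counters is at every point a final segment $\{m_j,\dots,m_k\}$ of the counters; this is precisely the ``uniqueness of threshold'' invariant of HOMCA', and it holds here exactly because the live part of a reachable NDCMA class memory function is prefix-closed towards the root. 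In the weak case the construction is now complete: $\calA'$ accepts iff the simulated $\calA$-state is in $F_G$, and local acceptance is vacuous since $F_L=Q$.

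For the strong case I would additionally match the local-acceptance condition with HOMCA hereditary emptiness. From a rest configuration whose $\calA$-state is in $F_G$ and whose level-$0$ state is in $F_L$, $\calA'$ runs an $\epsilon$-labelled emptying routine that recursively descends the encoded tree and is permitted to delete a node's marker letter $(\mathit{st},j,p)$ only when $p\in F_L$, a node's multiset being fully consumed only after all of its children. Since acceptance of $\calA'$ requires $m_k$ to be hereditarily empty and the only $A$-letters present are marker letters, this routine can succeed exactly when every recorded state lies in $F_L$, i.e.\ exactly when the reached $\calA$-configuration is accepting.

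Finally, correctness is a routine simultaneous induction: after reading a prefix $u$ of the input, the rest configurations reachable by $\calA'$ correspond (up to automorphism of $\dataset$) to the configurations of $\calA$ reachable on data words $w$ with $str(w)=u$, each $\calA'$-block realising one $\calA$-transition and conversely, and the two acceptance conditions matching by the preceding paragraphs; hence $\mathcal{L}(\calA')=str(\mathcal{L}(\calA))$. I expect the main obstacle to be the bookkeeping of the second and third paragraphs — designing the encoding so that node labels can be read and updated through nested $load$/$store$ sequences, and checking that the navigation always stays within the HOMCA' restrictions — rather than anything conceptually deep; the underlying correspondence is an elaboration over nested data of the CMA/DA translation of~\cite{bjorklund10}.
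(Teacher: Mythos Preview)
Your proposal is correct and follows essentially the same approach as the paper: the same encoding (a level-$j$ data value as a level-$(k-j)$ multiset carrying a marker chain down to a letter recording its state, level-$k$ data values as bare letters), the same per-letter simulation (a block of $load$'s walking down the chosen path, reading and rewriting each marker via $dec/inc$, with $new$ in place of $load$ for fresh values, then $store$'s back to a rest configuration), and the same acceptance mechanism (an $\epsilon$-emptying phase from $F_G$ that may only decrement markers whose recorded state lies in $F_L$, so hereditary emptiness coincides with the local-acceptance condition, and in the weak case the emptying is vacuous). The only difference is cosmetic: you keep the level-$0$ state and the current $\calA$-state in the finite control, whereas the paper also stores the level-$0$ state as a marker $\{(s,0)\}^{k-1}$ inside $m_k$; both choices work for the same reasons.
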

\proof{
Following notation from \cite{BjorklundB07}, for a letter of the multiset alphabet $a$ we write $\set{a}^1$ for a level-$1$ multiset containing just the letter $a$ (once), and $\set{a}^{n+1}$ for a level-$(n+1)$ multiset containing just the multiset $\set{a}^n$ (once).
Data values are then represented as follows:
\begin{itemize}
\item level-$k$ data values are stored as letters in level-$1$ multisets: a letter $(s,k)$ represents a level-$k$ data value remembered as being in state $s$.
\item level-$(k-1)$ data values are stored as level-$1$ multisets, containing a letter for each nested data value, and a letter $(s,k-1)$ (where $s$ is the state the data value is remembered as being in).
\item level-$(n-1)$ data values are stored as level-$(k- n +1)$ multisets, containing a level-$(k-n+1)$ multiset for each nested data value, and a multiset $\set{(s,n-1)}^{k-n}$.
\end{itemize}
Let $\calA = (Q, \Sigma, \Delta, q_0, F_L, F_G)$ be a level-$k$ NDCMA, we define the HOCMA $\calA'$ as follows:
\begin{itemize}
\item $\calA'$ is a level-$k$ HOCMA
\item The multiset alphabet is $Q \times \set{0, \dots, k}$
\item The states are $Q \uplus \set{q_F} \uplus \bigcup_{\delta \in \Delta} T_\delta$ (where $T_\delta$ is some set of states which will be used in simulating $\delta$-transitions).
\item The initial state is $q_0$
\item $q_F$ is the only accepting state
\item The transitions are given as follows:
\begin{itemize}
	\item For each transition $\delta = p \xrightarrow{ a, (i, \threevec{p_0}{\vdots}{p_i})} q, \threevec{q_0}{\vdots}{q_i}$ we have a sequence of transitions starting from $p$ and ending in $q$, going through the states $T_\delta$, which make counter operations as follows:
	\begin{itemize}
		\item We assume that at the start of these transitions, the configuration has multisets $(\bot, \dots, \bot, m_k)$.  (i.e. only the top-level multiset is defined).  (If $p$ is the initial state this may not be the case, but the way to amend the following construction when the 0th level data value hasn't been seen is straightforward).
		\item We make counter operations:
		 $$load_{k-1} \cdot load_{k-2} \cdot  \dots \cdot load_1 \cdot dec_{(p_0,0)} \cdot inc_{(q_0,0)} \cdot store_1 \cdot store_2 \cdot \dots \cdot store_{k-1}$$ to ensure that the level-0 data value is in the correct place, and to update it.
		\item We then $load_{k-1}$ to select a level-1 data value which will be used for this transition, unless $p_1 = \bot$ in which case $new_{k-1}$ is run.
		\item We then do operations $$load_{k-2} \cdot \dots \cdot load_1 \cdot dec_{(p_1,1) \cdot} inc_{(q_1,1)} \cdot store_1 \cdot store_2 \cdot \dots \cdot store_{k-2}$$ to correctly check and update the level-1 data value.  (Unless $p_1 = \bot$, in which case the $load_i$ operations are replaced with $new_i$, and the $dec$ operation is omitted.)
		\item This process is repeated until we have checked the level-$i$ data value by doing $dec_{(p_i,i)} inc_{(q_i, i)}$.  Then all of the opened multisets are stored, up to $store_{k-1}$.
	\end{itemize}
	All but the first of these transitions are labelled with $\epsilon$ (and the first is labelled with $a$)
	\item From states in $F_G$ we have an $\epsilon$ transition to the state $q_F$
	\item From state $q_F$ there are transitions to make any load or store operation, and to make any $dec_{(s,i)}$ where $s \in F_L$.
\end{itemize}
\end{itemize}

It is a straightforward induction to show that $\calA'$ simulates $\calA$.  If $\calA$ is weak, then it is clear that from $q_F$ any configuration can be reduced to a hereditarily empty one, hence a weak HOCMA' is enough.
}

\begin{proposition}
Given a strong (resp. weak) HOCMA' $\calA'$, a strong (resp. weak) NDCMA $\calA$ of the same level can be constructed such that $str( \mathcal{L}(\calA)) = \mathcal{L}(\calA')$
\end{proposition}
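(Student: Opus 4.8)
\proof{
The plan is to reverse the encoding used in the preceding proposition. A level-$(k-i)$ data value will represent a level-$i$ multiset (for $1\le i\le k-1$), a level-$k$ data value will represent an occurrence of a multiset letter inside a level-$1$ multiset, and the unique level-$k$ multiset $m_k$ will be represented by the distinguished level-$0$ data value. The structural fact that makes this work is the one recorded just after the definition of HOMCA': in every reachable configuration $(q,m_1,\dots,m_k)$ there is a threshold $i$ with $m_1=\dots=m_i=\bot$ and $m_{i+1},\dots,m_k\ne\bot$, and the non-$\bot$ multisets are nested along the chain $m_{i+1}\subseteq m_{i+2}\subseteq\dots\subseteq m_k$ of pending $load$s. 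Under the encoding this chain becomes a single root-to-node path of ``currently open'' data values, one at each of the levels $0,1,\dots,k-i-1$. Accordingly the NDCMA $\calA$ I build (using the refined transition format of Appendix~\ref{appendix:ndcma-homca}, which lets one update the read data value and each of its ancestors individually) will use the class-memory states $\mathsf{op}$ (the data value lies on the open path), $\mathsf{cl}$ (a level-$({<}k)$ multiset currently stored inside its parent), $\underline{a}$ for each $a\in A$ (a live occurrence of $a$ in $m_1$) and $\mathsf{dead}$ (an occurrence already removed by some $dec_a$); its control states are pairs $(q,i)$ recording the HOMCA' state $q$ and the current threshold $i$.

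First I would show that each HOMCA' transition is simulated by a single NDCMA transition carrying the same letter (internal HOMCA' moves staying internal). For $inc_a$ (threshold $0$) the NDCMA reads a fresh level-$k$ data value all of whose ancestors are $\mathsf{op}$ and sets its memory to $\underline{a}$. For $dec_a$ it reads a level-$k$ data value in state $\underline{a}$ with all ancestors $\mathsf{op}$ and sets its memory to $\mathsf{dead}$. For $store_i$ ($i<k$, threshold $i-1$) it reads the open level-$(k-i)$ data value, checks that all its ancestors are $\mathsf{op}$, changes that data value's memory from $\mathsf{op}$ to $\mathsf{cl}$, and passes to threshold $i$. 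For $load_i$ (threshold $i$) it reads a level-$(k-i)$ data value in state $\mathsf{cl}$ whose ancestors are all $\mathsf{op}$, changes it to $\mathsf{op}$, and passes to threshold $i-1$. For $new_i$ ($i<k$) it does the same on a \emph{fresh} level-$(k-i)$ data value. The operation $new_k$ and the initial phase in which every $m_j$ is $\bot$ involve only the level-$0$ data value and are handled by the straightforward bookkeeping already alluded to in Appendix~\ref{appendix:ndcma-homca}. Because the stored threshold faithfully records which multisets are non-$\bot$ --- equivalently, which data values are currently open --- the ancestor guards of $\calA$ coincide exactly with the enabling conditions of the HOMCA' operations, so the open data values always form a genuine path and the reachable configurations of $\calA$ are precisely the encodings, up to automorphism of the nested dataset, of the reachable configurations of $\calA'$. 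Both inclusions of $str(\lang{\calA})=\lang{\calA'}$ then follow by a routine induction on run length: an accepting HOMCA' run is replayed by supplying fresh data values where required, and the control-state trace of any accepting run of $\calA$ spells out a legal run of $\calA'$.

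It remains to match the acceptance conditions, and this is where the strong/weak dichotomy is discharged. A HOMCA' configuration is hereditarily empty precisely when, written out in nested-set notation, it contains no letter of $A$; under the encoding this says that no level-$k$ data value is live, i.e.\ every level-$k$ data value is in state $\mathsf{dead}$ (or was never read), while data values of level ${<}k$ may be $\mathsf{op}$ or $\mathsf{cl}$ arbitrarily. Hence in the strong case I take $F_L$ to be all states of $\calA$ except the $\underline{a}$'s, and $F_G=\{\,(q,i)\mid q\in F\,\}$ (so indeed $F_G\subseteq F_L$): a final configuration of $\calA$ then corresponds exactly to a run of $\calA'$ that ends in $F$ with a hereditarily empty store. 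In the weak case the hereditary-emptiness clause is simply absent on the HOMCA' side while $F_L=Q_{\calA}$ on the NDCMA side, so both conditions collapse to ``ends in a final control state''. All of this is visibly effective.

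The main obstacle is exactly the invariant that keeps the open data values a path: one must check that the threshold carried in the control state always agrees with the set of data values marked $\mathsf{op}$, so that the ancestor guards of $\calA$ line up with the preconditions of the HOMCA' operations. This is precisely what the HOMCA' restrictions buy us --- for a general HOMCA one could reach a configuration with, say, $m_1\ne\bot$ but $m_2=\bot$, forcing two unrelated open data values and breaking the encoding --- which is why the simulation is of HOMCA' rather than HOMCA, the remaining gap being closed by the HOMCA/HOMCA' equivalence established earlier in this section.
}
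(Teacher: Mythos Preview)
Your construction is correct and is essentially the paper's: your $\mathsf{op}/\mathsf{cl}/\underline{a}/\mathsf{dead}$ are exactly the paper's $\bullet/\circ/a\in A/q_{dead}$, and your exclusion of the $\underline{a}$'s from $F_L$ is precisely how the paper enforces hereditary emptiness. The only (inessential) difference is that you carry the threshold in the control state as a pair $(q,i)$, whereas the paper stores it in the class-memory label of the level-$0$ data value via dedicated states $(1),\dots,(k)$.
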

\proof{
We note that by definition of HOCMA', the only reachable configurations are of the form $(q, m_1, \dots , m_n)$ such that for some $i \leq n$ we have that for all $j < i$, $m_j = \bot$, and for all $j' \geq i$, $m_{j'} \neq \bot$. (Hence the first counter operation used must be $new_n$.)  In the following construction we will use the remembered location of the level-0 data value to track the value of this $i$.

Given a level-$k$ HOMCA $\calA' = \anglebra{Q', \Sigma, \Delta', q_0', F'}$ over a multiset alphabet $A$, we construct a level-$k$ NDCMA $\calA = \anglebra{Q, \Sigma, \Delta, q_0, F_L, F_G}$ that recognises the same string-projection as $\calA'$ as follows:
\begin{itemize}
\item $Q = Q' \uplus A \uplus \set{(1), \dots, (k)} \uplus \set{ \bullet, \circ} \uplus \set{q_{dead}}$.  Here the states $(i)$ will be used to keep track of the smallest $i$ such that $m_i \neq \bot$ in the current configuration, and the state $\bullet$ will be used to keep track of which data values are representing multisets currently ``active" (i.e. equal to one of the $m_i$), while the state $\circ$ will store the ``inactive" multisets.
\item $q_0 = q_0'$
\item $F_G = F$
\item $F_L = \set{q_{dead}} \cup \set{(1), \dots, (k)} \cup \set{\bullet, \circ}$
\item If $p \xrightarrow{a, op} q \in \Delta'$ then we have a transition $p \xrightarrow{a, (i, \threevec{p_0}{\vdots}{p_i})} q, \threevec{q_0}{\vdots}{q_i}$ where $i$, $\threevec{p_0}{\vdots}{p_i}$ and $\threevec{q_0}{\vdots}{q_i}$ are such that:
\begin{itemize}
	\item If $op = new_k$ then $i=0$ and $p_0 = \bot$ and $q_0 = (k)$
	
	\item If $op = new_m$ when $m \neq k$ then $i = k - m$ and:
	\begin{itemize}
		\item $p_0 = (m+1)$ and $q_0 = (m)$
		\item for $0 < j < i$ $p_j = \bullet = q_j$
		\item $p_i = \bot$ and $q_i = \bullet$
	\end{itemize}	 
	
	\item If $op = inc_b$ ($b \in A$) then $i=k$ and 
	\begin{itemize}
		\item $p_0 = (1) = q_0$ 
		\item for $0 < j < k$ $p_j = \bullet = q_j$
		\item $p_k = \bot$ and $q_k = b$
	\end{itemize}
	
	\item If $op = dec_b$ ($b \in A$) then $i=k$ and 
	\begin{itemize}
		\item $p_0 = (1) = q_0$ 
		\item for $0 < j < k$ $p_j = \bullet = q_j$
		\item $p_k = b$ and $q_k = q_{dead}$
	\end{itemize}
	
	\item If $op = load_m$ ($m < k$) then $i = k - m$ and:
	\begin{itemize}
		\item $p_0 = (m+1)$ and $q_0 = (m)$
		\item for $0 < j < i$ $p_j = \bullet = q_j$
		\item $p_i = \circ$ and $q_i = \bullet$
	\end{itemize}
	
	\item If $op = store_m$ ($m < k$) then $i = k - m$ and:
	\begin{itemize}
		\item $p_0 = (m)$ and $q_0 = (m+1)$
		\item for $0 < j < i$ $p_j = \bullet = q_j$
		\item $p_i = \bullet$ and $q_i = \circ$
	\end{itemize}
\end{itemize}
\end{itemize}
It is again straightforward to show this simulates the run of $\calA'$.  The acceptance condition is checked by $F_L$ not including $A$: the only way for the multiset to not be hereditarily empty is for an increment never to have a corresponding decrement, and this (and only this) can leave a data value in $A$.  If we have a weak HOCMA', then we do not need to check this and can let $F_L$ be the whole set of states.
}

This completes the proof of theorem \ref{thm:ndcma-homca-equivalence}.  We note that our earlier theorem \ref{thm:ndcma-emptiness} is now a corollary of emptiness of HOMCA being undecidable (shown in \cite{BjorklundB07}).

\end{document}